\theoremstyle{plain}
\newtheorem{thm}{\protect\theoremname}
\theoremstyle{definition}
\newtheorem{defn}[thm]{\protect\definitionname}
\theoremstyle{remark}
\newtheorem{rem}[thm]{\protect\remarkname}
\theoremstyle{plain}
\newtheorem{lem}[thm]{\protect\lemmaname}
\theoremstyle{plain}
\newtheorem{prop}[thm]{\protect\propositionname}
\providecommand{\definitionname}{Definition}
\providecommand{\lemmaname}{Lemma}
\providecommand{\propositionname}{Proposition}
\providecommand{\remarkname}{Remark}
\providecommand{\theoremname}{Theorem}
\begin{document}
\title{On Low-complexity Lattice Reduction Algorithms for Large-scale MIMO Detection: the Blessing of Sequential Reduction}
\author{Shanxiang Lyu, Jinming Wen, Jian Weng and Cong Ling
	\thanks{
		{ This
			work was supported in part by the National Natural Science Foundation of China under Grants 61902149,  61932010 and 11871248, in part by the Major Program of Guangdong Basic and Applied Research under Grant 2019B030302008, in part by the Natural Science Foundation of Guangdong Province under Grants 2019B010136003 and 2019B010137005, in part by the Fundamental Research Funds for the Central Universities under Grant 21618329,  and in part by the National Key R\&D Plan of
			China under Grants 2017YFB0802203 and  2018YFB1003701. This paper was presented in part at the 8th International Conference on
			Wireless Communications and Signal Processing (WCSP), Yangzhou, China, Oct. 2016.}
	}
	\thanks{
		S. Lyu, J. Wen and J. Weng are with the College of Cyber Security, Jinan University, Guangzhou 510632, China (e-mail: shanxianglyu@gmail.com,
jinmingwen1@163.com, cryptjweng@gmail.com).
S. Lyu is also with the State Key Laboratory of Cryptology, P.O. Box 5159, Beijing, 100878, China.
} 
	\thanks{C. Ling is with the Department of Electrical
	and Electronic Engineering, Imperial College London, London SW7 2AZ,
	United Kingdom (e-mail: cling@ieee.org). }
}
\maketitle
\begin{abstract}
Lattice reduction is a popular preprocessing strategy in multiple-input
  multiple-output (MIMO) detection.
{\textcolor{black}{In a quest for developing a low-complexity reduction algorithm for large-scale problems, this paper investigates a new framework called sequential reduction (SR),}}
 which aims to reduce the lengths of all basis vectors.
The performance upper bounds of the strongest reduction in SR
are given when the lattice dimension is no larger than $4$. The proposed new framework
enables the implementation of a hash-based low-complexity lattice
reduction algorithm, which becomes especially tempting when applied
to large-scale MIMO detection. Simulation results 
show that, compared to other reduction algorithms,  the hash-based SR algorithm exhibits the lowest complexity while maintaining comparable error performance.
\end{abstract}

\begin{IEEEkeywords}
lattice reduction, MIMO, large-scale,  hash-based.
\end{IEEEkeywords}

\section{Introduction}

The number of antennas has been scaled up to tens or hundreds in multiple-input multiple-output (MIMO) systems to fulfill the performance requirements
needed by the next generation communication systems \cite{Rusek2013}.
A critical challenge that comes with very large arrays is to design
reliable and computationally efficient detectors. Though the well-known
maximum likelihood detector (MLD) provides optimal error performance,
it suffers from exponential complexity that grows with the number of transmit
antennas \cite{Agrell2002}. In the past two decades, lattice-reduction-aided suboptimal detection techniques have been well investigated
\cite{Yao2002,Windpassinger2004,DBLP:journals/tsp/WangHL19}, whose instantaneous complexity
does not depend on constellation size and noise realizations, but
collect the same diversity as the MLD for MIMO systems \cite{Taherzadeh2007b,Taherzadeh2007a,Ma2008}. Although conventional lattice
reduction algorithms suffice for small-scale MIMO systems, there is still
an avenue to pursue a more practical low-complexity reduction algorithm
for large-scale systems. Moreover, an efficient reduction algorithm
for large-scale problems may also find its applications to cryptanalysis
\cite{Nguyen2010} and image processing \cite{Neel2001}.

The principle of designing a reduction algorithm varies depending
on the desired basis properties: to make all the basis vectors short,
or to make the condition number of the reduced basis small. There
are several popular types of lattice reduction strategies, such as
Minkowski reduction, Korkine-Zolotareff reduction (KZ) \cite{Korkinge1877},
Gauss reduction \cite{Micciancio2002}, Lenstra--Lenstra--Lov\'asz
(LLL) reduction \cite{Lenstra1982}, Seysen reduction \cite{Seysen1993},
etc. They yield reduced bases with shorter or more orthogonal basis
vectors, and provide a trade-off between the quality of the reduced
basis and the computational effort required for finding it. In essence,
a reduction algorithm aims to find a unimodular matrix to
transform an input basis into another one with better property. The
process involves a series of elementary operations noted as reflection,
swapping, and translation. These operations vary for distinct algorithms.

Much work has been done to advance conventional reduction algorithms.
Regarding KZ, refs. \cite{Zhang2012tsp,Lyu2017,WenTIT18} give some
practical implementations and improve the performance bounds. As for
blockwise KZ, its faster implementations and the expected basis properties
are given in \cite{Chen2011b}. Researchers have also been
constructing and analyzing the variants of LLL with great effort. For instance, the size reduction step is optimized in \cite{Lyu2017, Ling2010, Zhang2012},
the implementation order of swaps is simplified in \cite{Vetter2009,Wen2014b,Wen2016},
and the fixed complexity versions of LLL are given in \cite{Ling2010,Wen2014b,DBLP:journals/icl/WenC17}. In contrast, the direction on Seysen reduction has few follow-up studies \cite{Seethaler2007,ma2010seysen}, partly because of the fact that Seysen reduction has unsatisfactory performance in high dimensions.

While LLL and blockwise KZ are still the default choices in cryptography to reduce a
basis in hundreds of dimensions, the element-based lattice reduction
($\mathrm{ELR}$) proposed in \cite{Zhou2013} has become more attracting in large
MIMO, which preprocesses a large basis with even lower complexity
than LLL. Later ref. \cite{Zhou2013} has been  generalized to $\mathrm{ELR}^{+}$
  \cite{Zhou2013a} for small-scale problems, but the theoretical characterization of  $\mathrm{ELR}$ and $\mathrm{ELR}^{+}$
has not been given a rigorous treatment, even for small dimensions. {\color{black}It is noteworthy that $\mathrm{ELR}$ and $\mathrm{ELR}^{+}$   have totally different structures with LLL variants, and one 
	might be lured into the belief that $\mathrm{ELR}$ and $\mathrm{ELR}^{+}$  can be tuned to arrive at more sophisticated methods. Nevertheless, no analytical skills can be inherited from LLL/KZ literature \cite{Lenstra1982,Micciancio2002}, which makes the performance analysis of the new algorithms complicated.}


{\color{black}

In
this work, we investigate a general form of  $\mathrm{ELR}$ and $\mathrm{ELR}^{+}$  which we refer to as   
sequential reduction (SR). We derive the objective function  from a MIMO detection task, and present the general form of an SR algorithm which can solve/approximate the smallest basis problem.  Unlike  KZ or Minkowski reduction, SR reduces the
basis vectors by using sub-lattices so as to avoid a basis expansion
process. The strongest algorithm in SR tries to minimize the length of basis vectors
with the aid of a closest vector problem (CVP) oracle.  We show bounds on the
basis lengths and orthogonal defects for small dimensions. 
After that, the feasibility of applying SR to reduce a large dimensional
basis is analyzed, and we actually construct a hash-based algorithm for this task.
Our
simulation results then show the plausibility of using SR
in large-scale MIMO systems.

Preliminary results of this work have been partly presented
in a conference paper \cite{Lyu2016}. Compared with \cite{Lyu2016}, this work contains the following new contributions:

\begin{itemize}
	\item The performance bounds on small dimensional bases are rigorously analyzed (Theorems 2 and 3). Unlike the results in \cite{Lyu2016} that rely on an assumption about covering radius, these bounds hold for all input bases.
	\item Comparisons with other types of strong\&weak reduction are made (Section III-B, Section IV-D), including $\eta$-Greedy reduction, KZ and its variants, Minkowski reduction, LLL and its variants, and Seysen reduction.
	\item A Hash-based SR algorithm is constructed (Section IV). More specifically, the nearest neighbor search problem is approximately solved
	with the aid of hashing, and not through a brute-force search. 
	\item  The theoretical studies are supported with more simulation results (Section V), these include the comparisons with major lattice-reduction-aided MIMO detection algorithms, and the BER performance tested for various channels . 
	\item The types of bases feasible for using SR-Hash is discussed (Appendix A). We numerically show that the dual of large-scale Gaussian random bases have dense pairwise angles.
\end{itemize}
}

%

{\color{black} It is worth mentioning that SR 
	is emerging as a new building block in 
	lattice-reduction-aided MIMO detection.
	Thus, the proposed SR variants may also benefit list sphere decoding \cite{hochwald2003achieving} and Klein's sampling algorithm \cite{Liu2011it}.
}

The rest of this paper is organized as follows. Backgrounds about
lattices and lattice reduction in MIMO are reviewed in Section II.
The SR framework is subsequently introduced in Section III. The low-complexity
version of SR based on hashing is given in Section IV. After that,
Section V presents the simulation results. Conclusions and possible future research are presented in the last section.

Notation: Matrices and column vectors are denoted by uppercase and
lowercase boldface letters. The $i$th column and $\left(j,i\right)$th
entry of $\mathbf{B}$ are respectively denoted as $\mathbf{b}_{i}$
and $b_{i,j}$. $\mathbf{I}_{n}$ and $\mathbf{0}_{n}$ respectively
denote the $n\times n$ identity matrix and $n\times1$ zero vector,
and the operation $(\cdot)^{\top}$denotes matrix transposition. $\left[n\right]$
denotes the set $\left\{ 1,\ldots\thinspace,n\right\} $. For a set
$\Gamma$, $\mathbf{B}_{\Gamma}$ denotes the columns of $\mathbf{B}$
indexed by $\Gamma$. $\mathrm{span}(\mathbf{B}_{\Gamma})$ denotes
the vector space spanned by vectors in $\mathbf{B}_{\Gamma}$. $\pi_{\mathbf{B}_{\Gamma}}(\mathbf{x})$
and $\pi_{\mathbf{B}_{\Gamma}}^{\bot}(\mathbf{x})$ denote the projection
of $\mathbf{x}$ onto $\mathrm{span}(\mathbf{B}_{\Gamma})$ and the
orthogonal complement of $\mathrm{span}(\mathbf{B}_{\Gamma})$, respectively.
$\lfloor x\rceil$ denotes rounding $x$ to the nearest integer, $|x|$
denotes getting the absolute value of $x$, and $\left\Vert \mathbf{x}\right\Vert $
denote the Euclidean norm of vector $\mathbf{x}$. $\mathbb{N}$ and
$\mathbb{Z}$ respectively denotes the set of natural numbers and
integers. The set of $n\times n$ integer matrices with determinants
$\pm1$ is denoted by $\mathrm{GL}_{n}(\mathbb{Z})$.

\section{Preliminaries}

\subsection{Lattices}

An $n$-dimensional lattice $\Lambda$ is a discrete additive subgroup
in the real field $\mathbb{R}^{n}$. Similarly to the fact that any
finite-dimensional vector space has a basis, a lattice has a basis.
To consider a square matrix for simplicity, a lattice generated by
basis $\mathbf{B}=[\mathbf{b}_{1},...,\mathbf{b}_{n}]\in\mathbb{R}^{n\times n}$
is defined as
\[
\Lambda(\mathbf{B})=\left\{ \mathbf{v}\mid\mathbf{v}=\sum_{i\in[n]}c_{i}\mathbf{b}_{i}\thinspace;\thinspace c_{i}\in\mathbb{Z}\right\} .
\]
The dual lattice of $\Lambda$ is defined as $\Lambda^{\dagger}=\left\{ \mathbf{u}\in\mathbb{R}^{n}\mid\langle\mathbf{u},\mathbf{v}\rangle\in\mathbb{Z},\forall\mathbf{v}\in\Lambda\right\} $.
One basis of $\Lambda^{\dagger}$ is given by $\mathbf{B}^{-\top}$.

The Gram-Schmidt (GS) basis of $\mathbf{B}$, referred to as $\mathbf{B}^{*}$,
is found by using $\mathbf{b}_{i}^{*}=\pi_{\left\{ \mathbf{b}_{1}^{*},...,\mathbf{b}_{i-1}^{*}\right\} }(\mathbf{b}_{i})=\mathbf{b}_{i}-\sum_{j=1}^{i-1}\mu_{i,j}\mathbf{b}_{j}^{*}$,
where $\mu_{i,j}=\langle\mathbf{b}_{i},\mathbf{b}_{j}^{*}\rangle/||\mathbf{b}_{j}^{*}||^{2}$.

The $i$th successive minimum of an $n$ dimensional lattice $\Lambda(\mathbf{B})$
is the smallest real positive number $r$ such that $\Lambda$ contains
$i$ linearly independent vectors of length at most $r$: 
\[
\lambda_{i}(\mathbf{B})=\inf\left\{ r\mid\dim(\mathrm{span}((\Lambda\cap\mathcal{B}(\mathbf{0},r)))\geq i\right\} ,
\]
in which $\mathcal{B}(\mathbf{t},r)$ denotes a ball centered at $\mathbf{t}$
with radius $r$. 

The orthogonality defect (OD) can alternatively quantify the goodness
of a basis 
\begin{equation}
\eta(\mathbf{B})=\frac{\prod_{i=1}^{n}||\mathbf{b}_{i}||}{\sqrt{|\mathrm{det}(\mathbf{B}^{T}\mathbf{B})}|}.\label{eq:OD}
\end{equation}
From Hadamard's inequality, we know that $\eta(\mathbf{B})\geq1$.
As the determinant of a given basis is fixed, the parameter is proportional
to the product of the lengths of the basis vectors. A necessary condition
for reaching the smallest orthogonality defect is to have a short
basis length defined as $l(\mathbf{B})=\max_{i}\left\Vert \mathbf{b}_{i}\right\Vert $. 

The $\varepsilon$CVP problem is, given a vector $\mathbf{y}\in\mathbb{R}^{n}$
and a lattice $\Lambda(\mathbf{B})$, find a vector $\mathbf{v}\in\Lambda(\mathbf{B})$
such that: 
\[
\left\Vert \mathbf{y}-\mathbf{v}\right\Vert ^{2}\leq\varepsilon\left\Vert \mathbf{y}-\mathbf{w}\right\Vert ^{2},\thinspace\thinspace\forall\thinspace\mathbf{w}\in\Lambda(\mathbf{B}).
\]
An algorithm that solves an $\varepsilon$CVP problem is referred
to as an $\varepsilon$CVP oracle. We write $\mathbf{v}=\varepsilon\mathrm{CVP}(\mathbf{y},\mathbf{B})$
or $\mathbf{v}=\mathrm{CVP}(\mathbf{y},\mathbf{B})$ if $\varepsilon=1$.

\subsection{Lattice-reduction-aided MIMO detection}

We considered an uplink multiuser large MIMO system, in which $n_{T}$
single-antenna users send data to a base station with $n_{R}$ antennas,
and both $n_{T},n_{R}$ are in the order of tens or hundreds. A received
complex-valued signal vector at the base station is written as:

\begin{equation}
\mathbf{y}_{c}=\mathbf{B}_{c}\mathbf{x}_{c}+\mathbf{w}_{c},\label{EqMimoCModel}
\end{equation}
where $\mathbf{B}_{c}\in\mathbb{C}^{n_{R}\times n_{T}}$ denotes a
channel matrix perfectly known at the base station, $\mathbf{x}_{c}\in\mathbb{C}^{n_{T}}$
refers to a signal vector with entries drawn from a QAM constellation,
and $\mathbf{w}_{c}\in\mathbb{C}^{n_{R}}$ denotes a zero-mean additive
noise vector with entries independently and identically following
the complex normal distribution $\mathcal{CN}\left(0,\sigma_{w}^{2}\right)$. 

To simplify the analysis we will focus on representations in the real
field, so (\ref{EqMimoCModel}) is transformed to an equivalent real
value system with
\begin{equation}
\mathbf{y}=\bar{\mathbf{B}}\mathbf{x}+\mathbf{w},\label{EqMimoModel}
\end{equation}
where 
\begin{equation}
\bar{\mathbf{B}}=\left[\begin{array}{cc}
\Re(\mathbf{B}_{c}) & -\Im(\mathbf{B}_{c})\\
\Im(\mathbf{B}_{c}) & \Re(\mathbf{B}_{c})
\end{array}\right],\label{EqMatrix}
\end{equation}
and $\mathbf{y}=\left[\Re(\mathbf{y})^{\top},\Im(\mathbf{y})^{\top}\right]^{\top}$,
$\mathbf{x}=\left[\Re(\mathbf{x})^{\top},\Im(\mathbf{x})^{\top}\right]^{\top}$,
$\mathbf{w}=\left[\Re(\mathbf{w})^{\top},\Im(\mathbf{w})^{\top}\right]^{\top}$
are all real and imaginary compositions. Here the noise variance of
$\mathbf{w}$ becomes $\sigma^{2}=\sigma_{w}^{2}/2$.

Lattice reduction is essentially multiplying a given basis with a
unimodular matrix $\mathbf{U}\in\mathrm{GL}_{n}(\mathbb{Z})$ to get
a reduced basis $\tilde{\mathbf{B}}\triangleq \bar{\mathbf{B}}\mathbf{U}$.
For a lattice-reduction-aided detector, we first rewrite Eq. (\ref{EqMimoModel})
as: 
\[
\mathbf{y}=\tilde{\mathbf{B}}(\mathbf{U}^{-1}\mathbf{x})+\mathbf{w}.
\]
To make the unimodular transform compact for the QAM constellation,
we need to scale and shift signal vector $\mathbf{x}$ to get $\mathbf{x}\leftarrow(\mathbf{x}+\mathbf{1}_{n\times1})/2$,
so that the constraint on $\mathbf{x}$ become a consecutive integer
set $\Xi^{n}$. Let $\mathbf{y}\leftarrow(\mathbf{\mathbf{y}}+\tilde{\mathbf{B}}\mathbf{U}^{-1}\mathbf{1}_{n\times1})/2$,
then the inferred signal vector is given by:
\begin{equation}
\hat{\mathbf{x}}=2\mathcal{Q}_{\Xi^{n}}(\mathbf{U}\mathcal{Q}_{\mathbb{Z}^{n}}(\mathcal{E}(\mathbf{y},\tilde{\mathbf{B}})))-\mathbf{1}_{n\times1},\label{eq:sic detec}
\end{equation}
where $\mathcal{E}(\mathbf{y},\tilde{\mathbf{B}})$ denotes a low-complexity
detector that could be zero-forcing (ZF) or successive-interference-cancellation
(SIC), and $Q\left(\cdot\right)$ denotes a quantization function
with respect to its subscript. Given certain information about the
signal vector, the detectors can be implemented under an minimum-mean-square-error
(MMSE) principle. The MMSE-based ZF/SIC detectors are similarly given
by extending the size of the system: $\mathbf{y}\leftarrow[\mathbf{y}^{\top},\mathbf{0}_{1\times n}]^{\top}$,
$\bar{\mathbf{B}}\leftarrow[\bar{\mathbf{B}}^{\top},\sigma/\sigma_{s}\mathbf{I}_{n}]^{\top}$,
with $\sigma_{s}^{2}$ referring to the variance of a signal symbol.

\subsection{The objective in lattice reduction}

Hereby we explain the design criteria of lattice reduction used in
MIMO detection. For a set of linearly independent vectors $\bar{\mathbf{B}}=[\bar{\mathbf{b}}_{1},\ldots,\bar{\mathbf{b}}_{n}]$,
we define its fundamental parallelepiped as 
\[
\mathcal{P}(\bar{\mathbf{B}})=\left\{ \sum_{i=1}^{n}c_{i}\bar{\mathbf{b}}_{i}\thinspace|\thinspace-1/2\leq c_{i}\leq1/2\right\} .
\]
Choosing $\mathcal{E}(\mathbf{y},\tilde{\mathbf{B}})$ as the SIC
\cite{Ling2011} detector, then the pairwise error probability $P_{e}$
based on (\ref{eq:sic detec}) becomes 
\begin{eqnarray}
P_{e} & = & 1-\mathrm{Pr}(\mathbf{w}\in\mathcal{P}(\bar{\mathbf{B}}^{*}))\nonumber \\
 & = & 1-\prod_{i=1}^{n}\mathrm{Pr}(|\mathbf{w}^{\top}\bar{\mathbf{b}}_{i}^{*}|<\left\Vert \bar{\mathbf{b}}_{i}^{*}\right\Vert ^{2}/2)\nonumber \\
 & = & 1-\prod_{i=1}^{n}\mathrm{erf}\Big(\frac{\left\Vert \bar{\mathbf{b}}_{i}^{*}\right\Vert }{2\sqrt{2}\sigma}\Big)\nonumber \\
 & \leq & 1-\prod_{i=1}^{n}\mathrm{erf}\Big(\frac{1}{2\sqrt{2}\sigma\left\Vert \mathbf{d}_{i}\right\Vert }\Big)\label{eq:pe last}
\end{eqnarray}
where the last inequality comes from $\left\Vert \bar{\mathbf{b}}_{i}^{*}\right\Vert =\left\Vert \pi_{\bar{\mathbf{b}}_{1},...,\bar{\mathbf{b}}_{i-1}}(\bar{\mathbf{b}}_{i})\right\Vert \geq\left\Vert \pi_{\bar{\mathbf{b}}_{1},...,\bar{\mathbf{b}}_{i-1},\bar{\mathbf{b}}_{i},...,\bar{\mathbf{b}}_{n}}(\bar{\mathbf{b}}_{i})\right\Vert =1/\left\Vert \mathbf{d}_{i}\right\Vert $,
and $\mathbf{d}_{i}$ is the $i$th vector in the dual basis of $\Lambda^{\dagger}$.

From (\ref{eq:pe last}), it becomes clear that the upper bound on
$P_{e}$ is mainly controlled by the lengths of vectors in the dual basis, i.e.,  $\left\Vert \mathbf{d}_{1}\right\Vert ,\ldots,\left\Vert \mathbf{d}_{n}\right\Vert $.
Based on this observation, we can solve/approximate the following problem in the dual lattice to attain better error rate performance
for the above lattice-reduction-aided SIC detector.
\begin{defn}[SBP]
The smallest basis problem (SBP) is, given a lattice $\Lambda$, find the
basis with the smallest orthogonality defect.
\end{defn}

To address SBP, a designed reduction algorithm should make all basis
vectors as short as possible. Moreover, since the basis dimension
is in the order of tens or hundreds in large MIMO, we need a low-complexity
lattice reduction algorithm that reduces the basis with satisfactory
performance.

\section{Sequential reduction framework}

\textcolor{black}{The fundamental principle of sequential reduction is to reduce a basis
	vector by using all other vectors that span a sublattice.}
In the new method, given an input basis $\mathbf{B}$\footnote{\textcolor{black}{Unless otherwise specified, $\mathbf{B}$ is chosen from the dual of a channel matrix.}}, we sequentially solve $\mathbf{s}_{i}=\varepsilon\mathrm{CVP}(\mathbf{b}_{i},\mathbf{B}_{\left[n\right]\backslash i})$
with $\left[n\right]\backslash i=\{1,...,n\}\setminus i$. For each
$\mathbf{s}_{i}$, we test whether the residue distance is shorter:
$||\mathbf{b}_{i}-\mathbf{s}_{i}||^2<\tau||\mathbf{b}_{i}||^2$, where $\tau \in (0,1]$ \footnote{ \textcolor{black}{Choosing $\tau>1$ may make the algorithm diverge.}} is a parameter to control the complexity. If this
holds, we update $\mathbf{b}_{i}$ by $\mathbf{b}_{i}\leftarrow\mathbf{b}_{i}-\mathbf{s}_{i}$.
Here both $\mathbf{s}_{i}=\mathbf{0}$ and the $\mathbf{s}_{i}$ that
makes $\left\Vert \mathbf{b}_{i}-\mathbf{s}_{i}\right\Vert =\left\Vert \mathbf{b}_{i}\right\Vert $
are declared as ineffective attempts. A threshold parameter $m$ is
set to count these useless trials. The algorithm terminates if $m>n$,
which means no more vectors can be further reduced. The general form
of sequential reduction is summarized in Algorithm 1. 

 \begin{algorithm} 
\KwIn{lattice basis $\mathbf{B}=[\mathbf{b}_1,\ldots, \mathbf{b}_n]$, complexity threshold $\tau$;} 
\KwOut{reduced lattice basis $\mathbf{B}$.}  
$i=0$, $m=1$\;  \While{$m \leq n$}
{ 
$i \leftarrow (i~\mathrm{mod}~n)+1$; \Comment{The column index}\;
$ \mathbf{s}_{i} =\varepsilon \mathrm{CVP}(\mathbf{b}_i, \mathbf{B}_{[n]\setminus i})$; \Comment{Exact/approximated CVP solvers}\;
    \If{
$||\mathbf{b}_{i}-\mathbf{s}_{i} ||^2< \tau||\mathbf{b}_{i}||^2$
}
{ $ \mathbf{b}_{i} \leftarrow \mathbf{b}_{i}- \mathbf{s}_{i} $\; 
$m=1$\; 
}    
\Else{$m \leftarrow m +1$\;} 
}
\caption{The general form of an SR algorithm.} 
\end{algorithm}

An SR algorithm maintains a lattice basis due to the following reason.
In round $m$, suppose $\sum_{k\in\left[n\right]\backslash i}c_{k}\mathbf{b}_{k}$
is a valid reduction on $\mathbf{b}_{i}$, then the lattice basis
updating process becomes $\mathbf{B}\leftarrow\mathbf{B}\mathbf{T}^{m}$,
with $T_{k,k}^{m}=1\thinspace\forall\thinspace k\in[n]$, $T_{k,i}^{m}=-c_{k}\thinspace\forall\thinspace k\in\left[n\right]\backslash i$
and all other entries are zeros. Since $\mathbf{T}^{m}$ is an integer
matrix with determinant $1$, $\mathbf{T}^{m}$ is unimodular, and
the composition of the transform matrices from different rounds maintains
a unimodular matrix. 

If an exact CVP oracle is chosen in line 4 of Algorithm 1, we call
the algorithm SR-CVP. By choosing other approximate CVP solvers, we can obtain other variants that have lower complexity. As shown in Fig. \ref{fig1_lambda1-1-1}, SR encompasses SR-CVP, SR-Pair and SR-Hash.
The red box in the figure denotes SR, whose SR-CVP,
SR-Pair and SR-Hash algorithms feature decreasing complexity. Their
analogies in the conventional KZ framework are shown in the black
box.

\begin{figure}[htb]
	\center
	
	\includegraphics[clip,width=0.45\textwidth]{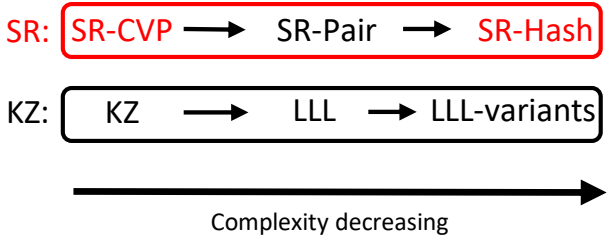}
	
	\caption{SR contains algorithms with different performance-complexity trade-offs.}
	\label{fig1_lambda1-1-1} 
\end{figure}

\subsection{Basis properties of SR-CVP}
\textcolor{black}{We need to understand
	the performance limits of SR by first analyzing SR-CVP.}
Hereby we set $\tau=1$ in the analysis for brevity. When
no more attempts using $\mathbf{s}_{i}=\mathrm{CVP}(\mathbf{b}_{i},\mathbf{B}_{\left[n\right]\backslash i})$
can further reduce the basis in the algorithm, for all $\mathbf{s}_{i}\in\Lambda\left(\mathbf{B}_{\left[n\right]\backslash i}\right)$
we have
\begin{align}
\underset{\mathrm{term}\thinspace1}{\underbrace{\left\Vert \mathbf{b}_{i}\right\Vert ^{2}}} & \leq||\mathbf{b}_{i}-\mathbf{s}_{i}||^{2}\nonumber \\
 & =||\pi_{\mathbf{B}_{\left[n\right]\backslash i}}^{\bot}(\mathbf{b}_{i})+\pi_{\mathbf{B}_{\left[n\right]\backslash i}}(\mathbf{b}_{i})-\mathbf{s}_{i}||^{2}\nonumber \\
 & =\underset{\mathrm{term}\thinspace2}{\underbrace{\left\Vert \pi_{\mathbf{B}_{\left[n\right]\backslash i}}^{\bot}(\mathbf{b}_{i})\right\Vert ^{2}}}+\underset{\mathrm{term}\thinspace3}{\underbrace{\left\Vert \pi_{\mathbf{B}_{\left[n\right]\backslash i}}(\mathbf{b}_{i})-\mathbf{s}_{i}\right\Vert ^{2}}},\label{eq:common equation}
\end{align}
where the second equality is due to Pythagoras' theorem. Note that
SR-CVP provides the tightest constraint on term 3, and approximations
of CVP oracles also distinguish themselves on the same term. Based on (\ref{eq:common equation}), we can prove the following theorem that consists of upper bounds for the basis length and the orthogonality defect.

\begin{thm}
\label{thm:ILD-reduction} For any dimension $n\leq4$, an SR-CVP reduced
basis satisfies:
\begin{eqnarray}
l(\mathbf{B}) & \leq & \sqrt{\frac{4n}{5-n}}\lambda_{n}(\mathbf{B}),\label{eq:ild2}\\
\eta(\mathbf{B}) & \leq & \left(\frac{4}{5-n}\right)^{n/2}\frac{\lambda_{n}^{n}(\mathbf{B})}{\lambda_{1}^{n}(\mathbf{B})}.\label{eq:deltaILD}
\end{eqnarray}
\end{thm}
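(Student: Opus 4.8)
The plan is to convert the SR-CVP termination condition into a per-vector geometric inequality, bound its two pieces by lattice invariants, and then close the recursion by a covering-radius estimate that is only useful when $n\le4$. First I would make the stopping rule precise. With $\tau=1$ and an exact oracle, $\mathbf{s}_i=\mathrm{CVP}(\mathbf{b}_i,\mathbf{B}_{[n]\setminus i})$ minimizes $\|\mathbf{b}_i-\mathbf{s}\|$ over $\mathbf{s}\in\Lambda_i:=\Lambda(\mathbf{B}_{[n]\setminus i})$; since $\mathbf{0}\in\Lambda_i$ gives the value $\|\mathbf{b}_i\|$, any genuine reduction would be strict, so at termination $\mathbf{0}$ is already a closest point. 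Geometrically this says the in-plane part $\pi_{\mathbf{B}_{[n]\setminus i}}(\mathbf{b}_i)$ lies in the Voronoi cell of the origin of $\Lambda_i$, i.e. $\|\pi_{\mathbf{B}_{[n]\setminus i}}(\mathbf{b}_i)\|\le\rho(\Lambda_i)$ with $\rho$ the covering radius. Substituting into the Pythagorean split \eqref{eq:common equation} gives, for every $i$, $\|\mathbf{b}_i\|^2\le\|\pi_{\mathbf{B}_{[n]\setminus i}}^{\bot}(\mathbf{b}_i)\|^2+\rho(\Lambda_i)^2$.

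Second, I would bound the two surviving terms. The sublattice $\Lambda_i$ has rank $n-1$ with basis $\{\mathbf{b}_k\}_{k\ne i}$, so Babai rounding yields $\rho(\Lambda_i)^2\le\tfrac14\sum_{k\ne i}\|\mathbf{b}_k^{*}\|^2\le\tfrac14\sum_{k\ne i}\|\mathbf{b}_k\|^2$. For the orthogonal height I would reuse the dual identity $\|\pi_{\mathbf{B}_{[n]\setminus i}}^{\bot}(\mathbf{b}_i)\|=1/\|\mathbf{d}_i\|$ already invoked in deriving \eqref{eq:pe last}; since $\mathbf{d}_i\in\Lambda^{\dagger}$ and the elementary transference inequality gives $\|\mathbf{d}_i\|\ge\lambda_1(\Lambda^{\dagger})\ge1/\lambda_n(\mathbf{B})$, I obtain $\|\pi_{\mathbf{B}_{[n]\setminus i}}^{\bot}(\mathbf{b}_i)\|\le\lambda_n(\mathbf{B})$.

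Third, I would sum the per-$i$ inequality over all $i$. Using $\sum_i\sum_{k\ne i}\|\mathbf{b}_k\|^2=(n-1)\sum_i\|\mathbf{b}_i\|^2$, this rearranges to $\tfrac{5-n}{4}\sum_i\|\mathbf{b}_i\|^2\le\sum_i\|\pi_{\mathbf{B}_{[n]\setminus i}}^{\bot}(\mathbf{b}_i)\|^2\le n\,\lambda_n^2(\mathbf{B})$. The dimension restriction enters exactly here: only for $n\le4$ is the coefficient $(5-n)/4$ strictly positive, allowing $\sum_i\|\mathbf{b}_i\|^2$ to be isolated, which gives $l(\mathbf{B})^2\le\sum_i\|\mathbf{b}_i\|^2\le\tfrac{4n}{5-n}\lambda_n^2(\mathbf{B})$, i.e. \eqref{eq:ild2}. (Taking only the index attaining the maximum would in fact give a smaller constant $\tfrac{4}{5-n}$, but the summed form is what reproduces the stated constant and simultaneously feeds the next step.) For \eqref{eq:deltaILD} I would pass the bound on $\sum_i\|\mathbf{b}_i\|^2$ through AM--GM to control the numerator $\prod_i\|\mathbf{b}_i\|\le(\tfrac{4}{5-n})^{n/2}\lambda_n^n(\mathbf{B})$ of $\eta(\mathbf{B})=\prod_i\|\mathbf{b}_i\|/\sqrt{\det(\mathbf{B}^{\top}\mathbf{B})}$, and then lower-bound the determinant in terms of $\lambda_1(\mathbf{B})$.

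The main obstacle is twofold. The more structural one is the covering-radius step: the argument survives only because $(n-1)/4<1$, which is precisely why the result is confined to $n\le4$; at $n=5$ the coefficient $(5-n)/4$ vanishes and the inequality becomes vacuous, so no amount of care in the other estimates can recover a bound. The more technical one is the determinant estimate needed for \eqref{eq:deltaILD}: the naive $\det\ge\lambda_1^n$ is \emph{false} in general (the hexagonal lattice is itself an SR-CVP fixed point with $\det<\lambda_1^2$), so the clean constant in \eqref{eq:deltaILD} cannot rest on a per-vector Gram--Schmidt lower bound and must instead exploit the slack in the numerator estimate; I would therefore verify the determinant bound case by case for $n\le4$ rather than quote a generic inequality.
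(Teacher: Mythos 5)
Your argument for (\ref{eq:ild2}) is essentially the paper's: the same Pythagorean split (\ref{eq:common equation}), the same Babai/covering-radius bound $\rho^{2}(\mathbf{B}_{[n]\setminus i})\leq\frac{1}{4}\sum_{j\neq i}\|\mathbf{b}_{j}\|^{2}$ for the in-plane term, the same summation over $i$, and the same reason the argument dies at $n=5$. The one place you genuinely diverge is the bound $\|\pi_{\mathbf{B}_{[n]\setminus i}}^{\bot}(\mathbf{b}_{i})\|\leq\lambda_{n}(\mathbf{B})$: the paper gets it from a QR argument (among the vectors attaining the successive minima, one has a nonzero last coefficient in the reordered basis $[\mathbf{B}_{[n]\setminus i},\mathbf{b}_{i}]$, so $|r_{n,n}|\leq\lambda_{n}$), whereas you use the identity $\|\pi^{\bot}(\mathbf{b}_{i})\|=1/\|\mathbf{d}_{i}\|$ together with the transference inequality $\lambda_{1}(\Lambda^{\dagger})\lambda_{n}(\Lambda)\geq1$. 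Both are correct and yield the same constant; yours is arguably cleaner and ties directly into the dual-basis identity already used in Section II-C. Your side remark that applying the per-vector inequality only at the maximizing index gives the sharper constant $\sqrt{4/(5-n)}$ is also correct, as is your observation that the summed form is still needed to feed the AM--GM step.

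For (\ref{eq:deltaILD}), however, your proposal stops short of a proof, and you should know that the gap you flag is present in the paper itself. The paper's final step asserts that the covolume $\sqrt{\det(\mathbf{B}^{\top}\mathbf{B})}$ is at least $\lambda_{1}^{n}(\mathbf{B})$ for $n\leq4$; as you note, this is false in general --- the hexagonal lattice has covolume $\sqrt{3}/2<\lambda_{1}^{2}$, and more generally the best generic lower bound is $\lambda_{1}^{n}/\gamma_{n}^{n/2}$ with Hermite's constant $\gamma_{n}>1$ for $2\leq n\leq4$. So you have located a real flaw in the published argument rather than merely an obstacle to your own. But ``verify the determinant bound case by case'' is not a proof: to rescue the stated constant you would have to show that the slack in $\prod_{i}\|\mathbf{b}_{i}\|\leq\bigl(\frac{4}{5-n}\bigr)^{n/2}\lambda_{n}^{n}$ for an SR-CVP-reduced basis absorbs the factor $\gamma_{n}^{n/2}$ (which is plausible --- e.g.\ for $n=2$ a reduced basis has $\|\mathbf{b}_{1}\|=\lambda_{1}$, so the numerator bound is far from tight --- but needs an argument), or else settle for the weaker, honestly provable bound $\eta(\mathbf{B})\leq\bigl(\frac{4\gamma_{n}}{5-n}\bigr)^{n/2}\lambda_{n}^{n}/\lambda_{1}^{n}$. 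As written, neither your proposal nor the paper establishes (\ref{eq:deltaILD}) with the stated constant.
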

\begin{proof} We first show upper bounds for terms
2 and 3 in (\ref{eq:common equation}), respectively. By constructing a sublattice
$\Lambda(\mathbf{B}')$ from vectors with lengths $\lambda_{1}(\mathbf{B}),...,\lambda_{n}(\mathbf{B})$
in $\Lambda(\mathbf{B})$, the covering radius satisfies
\begin{align*}
\rho(\mathbf{B}) & =\max_{\mathbf{x}}\mathrm{dist}(\mathbf{x},\Lambda(\mathbf{B}))\\
 & \leq\max_{\mathbf{x}}\mathrm{dist}(\mathbf{x},\Lambda(\mathbf{B}'))\\
 & \leq1/2\sqrt{\sum_{i=1}^{n}\lambda_{i}^{2}(\mathbf{B})},
\end{align*}
where the last inequality is obtained after applying Babai's nearest
plane algorithm \cite{Babai1986}. Since the residue distance of CVP
is upper bounded by the covering radius of the sublattice $\Lambda(\mathbf{B}_{\left[n\right]\backslash i})$,
we have for term 3 that
\begin{align}
\left\Vert \pi_{\mathbf{B}_{\left[n\right]\backslash i}}(\mathbf{b}_{i})-\mathbf{s}_{i}\right\Vert ^{2} & \leq\rho^{2}(\mathbf{B}_{\left[n\right]\backslash i})\nonumber \\
 & \leq\frac{1}{4}\sum_{j\neq i}\lambda_{j}^{2}(\mathbf{B}_{\left[n\right]\backslash i})\nonumber \\
 & \leq\frac{1}{4}\sum_{j\neq i}\left\Vert \mathbf{b}_{j}\right\Vert ^{2}.\label{eq:rho bound}
\end{align}

Regarding term 2, we use QR decomposition to get $\left[\mathbf{B}_{\left[n\right]\backslash i},\mathbf{b}_{i}\right]=\mathbf{QR}$, from which we obtain
 $\left\Vert \pi_{\mathbf{B}_{\left[n\right]\backslash i}}^{\bot}(\mathbf{b}_{i})\right\Vert =|r_{n,n}|$. Now w.l.o.g.
assume that the successive minima $\lambda_{1}(\mathbf{B}),...,\lambda_{n}(\mathbf{B})$
come from vectors $\mathbf{v}_{1}\triangleq\left[\mathbf{B}_{\left[n\right]\backslash i},\mathbf{b}_{i}\right]\mathbf{c}_{1},...,\mathbf{v}_{n}\triangleq\left[\mathbf{B}_{\left[n\right]\backslash i},\mathbf{b}_{i}\right]\mathbf{c}_{n}$.
To produce $n$ linearly independent vectors, there exists at least
one vector denoted as $\mathbf{c}_{k}$ whose $n$th entry $c_{k,n}$
is nonzero. Then we have $\left\Vert \mathbf{R}\mathbf{c}_{k}\right\Vert ^{2}=\lambda_{k}^{2}(\mathbf{B})\leq\lambda_{n}^{2}(\mathbf{B}).$
Together with $\left\Vert \mathbf{R}\mathbf{c}_{k}\right\Vert ^{2}=c_{k,n}^{2}r_{n,n}^{2}+\sum_{j=1}^{n-1}v_{n,j}^{2}\geq\left\Vert \pi_{\mathbf{B}_{\left[n\right]\backslash i}}^{\bot}(\mathbf{b}_{i})\right\Vert ^{2}$,
it arrives at

\begin{align}
 & \left\Vert \pi_{\mathbf{B}_{\left[n\right]\backslash i}}^{\bot}(\mathbf{b}_{i})\right\Vert ^{2}\leq\lambda_{n}^{2}(\mathbf{B}).\label{eq:proofgsbound}
\end{align}
By substituting (\ref{eq:rho bound}) and (\ref{eq:proofgsbound})
 to (\ref{eq:common equation}) for all basis vectors, we have
\[
\begin{cases}
\left\Vert \mathbf{b}_{1}\right\Vert ^{2}\leq\lambda_{n}^{2}(\mathbf{B})+\frac{1}{4}\sum_{j\neq1}\left\Vert \mathbf{b}_{j}\right\Vert ^{2},\\
\vdots\\
\left\Vert \mathbf{b}_{n}\right\Vert ^{2}\leq\lambda_{n}^{2}(\mathbf{B})+\frac{1}{4}\sum_{j\neq n}\left\Vert \mathbf{b}_{j}\right\Vert ^{2}.
\end{cases}
\]
The sum of these $n$ inequalities yields 
\[
\sum_{i=1}^{n}\left\Vert \mathbf{b}_{i}\right\Vert ^{2}\leq n\lambda_{n}^{2}(\mathbf{B})+\frac{n-1}{4}\sum_{i=1}^{n}\left\Vert \mathbf{b}_{i}\right\Vert ^{2}.
\]
If $n\leq4$, we have
\begin{equation}
\sum_{i=1}^{n}\left\Vert \mathbf{b}_{i}\right\Vert ^{2}\leq\frac{4n}{5-n}\lambda_{n}^{2}(\mathbf{B}).\label{eq:fundamentalin}
\end{equation}
Based on Eq. (\ref{eq:fundamentalin}), the longest vector in the
basis can be trivially bounded as
\[
l(\mathbf{B})\leq\sqrt{\frac{4n}{5-n}}\lambda_{n}(\mathbf{B}).
\]
To analyze the orthogonal defect, we apply the arithmetic mean-geometric
mean inequality on (\ref{eq:fundamentalin}) to get 
\begin{equation}
\prod_{i=1}^{n}||\mathbf{b}_{i}||\leq\left(\frac{1}{n}\sum_{i=1}^{n}\left\Vert \mathbf{b}_{i}\right\Vert ^{2}\right)^{n/2}\leq\left(\frac{4}{5-n}\right)^{n/2}\lambda_{n}^{n}(\mathbf{B}).\label{eq:finalod}
\end{equation}
Clearly the volume of the lattice is lower bounded by $\lambda_{1}^{n}\left(\mathbf{B}\right)$
for $n\leq$4, so along with (\ref{eq:finalod}) we obtain (\ref{eq:deltaILD}).

\end{proof} 
\textcolor{black}{
	If we alternatively set $\tau<1$, then upon termination of SR we have  $ ||\mathbf{b}_{i}||^2 \leq 1/\tau ||\mathbf{b}_{i}-\mathbf{s}_{i}||^2 $. Along with the techniques used in Theorem 2, we obtain 
	\begin{equation}
	l(\mathbf{B}) \leq \sqrt{\frac{4n}{4\tau-n+1}}\lambda_{n}(\mathbf{B}). \label{discu1}
	\end{equation}
	Since we have to ensure that the denominator $4\tau-n+1$ is larger than $0$,   we claim that inequality (\ref{discu1}) holds if $n<4\tau + 1$.
	If the CVP oracle is replaced by another suboptimal solver referred to as $\varepsilon$CVP,  then when bounding term 3  we have 
	\[\left\Vert \pi_{\mathbf{B}_{\left[n\right]\backslash i}}(\mathbf{b}_{i})-\mathbf{s}_{i}\right\Vert ^{2}   \leq \varepsilon \rho^{2}(\mathbf{B}_{\left[n\right]\backslash i}).\]
	Similarly to the above, it yields
	\begin{equation}
	l(\mathbf{B}) \leq \sqrt{\frac{4n}{4-\varepsilon n + \varepsilon}}\lambda_{n}(\mathbf{B}), \label{discu2}
	\end{equation}
  in which $n<4/\varepsilon + 1$.
}

Let $\theta_{i}$ be the angle between $\mathbf{b}_{i}$ and the subspace
$\mathrm{span}(\mathbf{B}_{\left[n\right]\backslash i})$,
and define $\theta_{\max}\triangleq\max_{i}\theta_{i}$.
Such a maximum angle between basis vectors and subspaces can also be bounded, as shown in the following theorem.

\begin{thm}
\label{thm:An-SR-CVP-reducedANGLE}An SR-CVP reduced basis satisfies
$\cos^{2}\theta_{\max}\leq\frac{n-1}{4}.$
\end{thm}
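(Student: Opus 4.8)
The plan is to translate the angle condition into a bound on the length of the projection of each basis vector onto the span of the others, and then reuse the covering-radius machinery already developed for Theorem~\ref{thm:ILD-reduction}. Writing $V_i\triangleq\mathrm{span}(\mathbf{B}_{[n]\setminus i})$, the angle between $\mathbf{b}_i$ and $V_i$ is realized by the orthogonal projection, so that $\cos^2\theta_i=\|\pi_{\mathbf{B}_{[n]\setminus i}}(\mathbf{b}_i)\|^2/\|\mathbf{b}_i\|^2$. Hence it suffices to upper bound the numerator $\|\pi_{\mathbf{B}_{[n]\setminus i}}(\mathbf{b}_i)\|^2$ for a conveniently chosen index $i$.

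The crux is to extract geometric information from the termination condition. Upon termination of SR-CVP (with $\tau=1$), the first line of (\ref{eq:common equation}) asserts $\|\mathbf{b}_i\|^2\le\|\mathbf{b}_i-\mathbf{s}\|^2$ for every $\mathbf{s}\in\Lambda(\mathbf{B}_{[n]\setminus i})$; since $\mathbf{0}$ lies in that sublattice, $\mathbf{0}$ is itself a closest lattice point to $\mathbf{b}_i$. By Pythagoras, minimizing $\|\mathbf{b}_i-\mathbf{s}\|$ over $\mathbf{s}\in V_i$ is equivalent to minimizing $\|\pi_{\mathbf{B}_{[n]\setminus i}}(\mathbf{b}_i)-\mathbf{s}\|$, so $\mathbf{0}$ is also a closest point to $\pi_{\mathbf{B}_{[n]\setminus i}}(\mathbf{b}_i)$. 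Therefore $\|\pi_{\mathbf{B}_{[n]\setminus i}}(\mathbf{b}_i)\|=\mathrm{dist}(\pi_{\mathbf{B}_{[n]\setminus i}}(\mathbf{b}_i),\Lambda(\mathbf{B}_{[n]\setminus i}))\le\rho(\mathbf{B}_{[n]\setminus i})$. Feeding this into the covering-radius chain already established in (\ref{eq:rho bound})---now with the residual vector taken to be $\mathbf{0}$ rather than a generic $\mathbf{s}_i$---gives $\|\pi_{\mathbf{B}_{[n]\setminus i}}(\mathbf{b}_i)\|^2\le\frac14\sum_{j\neq i}\|\mathbf{b}_j\|^2$.

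To finish, I would exploit the freedom in the index. Because cosine is nonnegative and decreasing on $[0,\pi/2]$, maximizing the angle is the same as minimizing the cosine, i.e. $\cos^2\theta_{\max}=\min_i\cos^2\theta_i$; it therefore suffices to exhibit a single index with small cosine. I choose $i$ to index a longest basis vector, $\|\mathbf{b}_i\|=l(\mathbf{B})$, so that $\|\mathbf{b}_j\|^2\le\|\mathbf{b}_i\|^2$ for all $j\neq i$ and hence $\sum_{j\neq i}\|\mathbf{b}_j\|^2\le(n-1)\|\mathbf{b}_i\|^2$. Chaining the three estimates yields $\cos^2\theta_{\max}\le\cos^2\theta_i\le\frac{(n-1)\|\mathbf{b}_i\|^2/4}{\|\mathbf{b}_i\|^2}=\frac{n-1}{4}$, as claimed.

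The main obstacle is conceptual rather than computational: one must use the termination condition twice. First, it certifies that $\mathbf{0}$ (not some nonzero $\mathbf{s}_i$) is the relevant closest vector, which is exactly what legitimizes bounding $\|\pi_{\mathbf{B}_{[n]\setminus i}}(\mathbf{b}_i)\|$ by the covering radius; without this, the projection length need not be dominated by $\rho$. Second, one must recognize that $\theta_{\max}$ selects the \emph{minimal} cosine, so the bound may be evaluated at the longest basis vector instead of having to hold uniformly in $i$. I would finally remark that, unlike Theorem~\ref{thm:ILD-reduction}, this argument invokes neither the AM--GM step nor the solvability of the linear system, so the bound is valid in every dimension $n$ with no restriction to $n\le4$.
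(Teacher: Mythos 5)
Your proposal is correct and follows essentially the same route as the paper: your bound $\left\Vert \pi_{\mathbf{B}_{[n]\setminus i}}(\mathbf{b}_i)\right\Vert^2\leq\frac{1}{4}\sum_{j\neq i}\left\Vert \mathbf{b}_j\right\Vert^2$ is exactly the paper's inequality (\ref{eq:common equation 2}) combined with the identity $\left\Vert \mathbf{b}_i\right\Vert^2\cos^2\theta_i=\left\Vert \mathbf{b}_i\right\Vert^2-\left\Vert \pi^{\bot}_{\mathbf{B}_{[n]\setminus i}}(\mathbf{b}_i)\right\Vert^2$, obtained from the same termination condition and the same covering-radius estimate (\ref{eq:rho bound}). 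The only (equally valid) divergence is the finishing step: the paper sums the $n$ inequalities to cancel $\sum_i\left\Vert\mathbf{b}_i\right\Vert^2$, whereas you evaluate the single inequality at a longest basis vector; both yield $\cos^{2}\theta_{\max}\leq\frac{n-1}{4}$ for all $n$.
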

\begin{IEEEproof}
Based on (\ref{eq:common equation}) and (\ref{eq:rho bound}) we
have 
\begin{equation}
\left\Vert \mathbf{b}_{i}\right\Vert ^{2}-\left\Vert \pi_{\mathbf{B}_{\left[n\right]\backslash i}}^{\bot}(\mathbf{b}_{i})\right\Vert ^{2}\leq\left\Vert \pi_{\mathbf{B}_{\left[n\right]\backslash i}}(\mathbf{b}_{i})-\mathbf{s}_{i}\right\Vert ^{2}\leq\frac{1}{4}\sum_{j\neq i}\left\Vert \mathbf{b}_{j}\right\Vert ^{2}.\label{eq:common equation 2}
\end{equation}
It then follows from $\left\Vert \mathbf{b}_{i}\right\Vert ^{2}\cos^{2}\theta_{i}=\left\Vert \mathbf{b}_{i}\right\Vert ^{2}-\left\Vert \pi_{\mathbf{B}_{\left[n\right]\backslash i}}^{\bot}(\mathbf{b}_{i})\right\Vert ^{2}$
that 
\[
\left\Vert \mathbf{b}_{i}\right\Vert ^{2}\cos^{2}\theta_{\max}\leq\left\Vert \mathbf{b}_{i}\right\Vert ^{2}\cos^{2}\theta_{i}\leq\frac{1}{4}\sum_{j\neq i}\left\Vert \mathbf{b}_{j}\right\Vert ^{2}.
\]
Similarly to the techniques used in proving Theorem \ref{thm:ILD-reduction},
we sum (\ref{eq:common equation 2}) for $i=1,\ldots,n$ to get 
\[
\cos^{2}\theta_{\max}\leq\frac{n-1}{4}.
\]
\end{IEEEproof}
Clearly the above theorem is non-trivial when $n\leq4$, and this
will come in handy when attacking a counter example in subsection
\ref{subsec:Comparisons-with-weak}.

\subsection{Discussions}
\begin{enumerate}
\item Comparison with $\eta$-Greedy reduction \cite[Fig.5]{Nguyen2012}
(also noted as $\mathrm{ELR}^{+}$-SLV in \cite{Zhou2013a}). Rather
than applying CVP for all vectors, $\eta$-Greedy reduction only performs
CVP for the longest basis vector . According to its definition \cite{Nguyen2012},
it is only a special case of SR-CVP and all SR-CVP reduced basis must
be greedy-reduced. For example, consider the following basis
\[
\left[\begin{array}{ccccc}
2 & 0 & 0 & 0 & 1\\
0 & 2 & 0 & 0 & 1\\
0 & 0 & 2 & 0 & 1\\
0 & 0 & 0 & 2 & 1\\
0 & 0 & 0 & 0 & \varepsilon
\end{array}\right]
\]
with parameter $\varepsilon\in\left(0,1\right)$. {\color{black} The shortest vector $\left[\begin{array}{ccccc}
	0 & 0 & 0 & 0 & \pm 2\varepsilon\end{array}\right]^\top$ cannot be reached by greedy reduction. Specifically, by using $1\times \mathbf{b}_5$ as the query point,  $\eta$-Greedy cannot find
	$2\mathbf{b}_{5}-\mathbf{b}_{1}-\mathbf{b}_{2}-\mathbf{b}_{3}-\mathbf{b}_{4}$ and
	$-2\mathbf{b}_{5}+\mathbf{b}_{1}+\mathbf{b}_{2}+\mathbf{b}_{3}+\mathbf{b}_{4}$. In contrast, SR-CVP additionally considers the cases of using $\mathbf{b}_{1}$, $\mathbf{b}_{2}$, $\mathbf{b}_{3}$, and $\mathbf{b}_{4}$ as query points. A shortest vector   
	$\mathbf{v}=\sum_{i=1}^{n}c_{i}\mathbf{b}_{i}$ with at least one coefficient
	$c_{k}=\pm1$ must be contained in the SR-CVP reduced basis.} 

\item Comparison with KZ and its variants \cite{Lagarias1990,Lyu2017}.
Recall that a basis $\mathbf{B}$ is called KZ reduced if it satisfies
the size reduction conditions, and $\pi_{\mathbf{B}_{\left[i-1\right]}}^{\perp}(\mathbf{b}_{i})$
is the shortest vector of the projected lattice $\pi_{\mathbf{B}_{\left[i-1\right]}}^{\perp}([\mathbf{b}_{i},\ldots\thinspace,\mathbf{b}_{n}])$
for $1\leq i\leq n$ \cite{Lagarias1990}. For a KZ reduced basis,
it satisfies \cite{Lagarias1990} $\left\Vert \mathbf{b}_{i}\right\Vert \leq\frac{\sqrt{i+3}}{2}\lambda_{i}(\mathbf{B}),\thinspace1\leq i\leq n.$
Though boosted KZ \cite{Lyu2017} can solve the length increasing
issue caused by size reduction, tuning $\pi_{\mathbf{B}_{\left[i-1\right]}}^{\perp}(\mathbf{b}_{i})$
to be the shortest vector in the projected lattice can still make
the basis longer. On the contrary, this issue is totally avoided in
SR-CVP.
\item Comparison with Minkowski reduction. Recall that a lattice basis $\mathbf{B}$
is called Minkowski reduced if for any integers $c_{1}$, ..., $c_{n}$
such that $c_{i}$, ..., $c_{n}$ are altogether coprime, it has $\left\Vert \mathbf{b}_{1}c_{1}+\cdots+\mathbf{b}_{n}c_{n}\right\Vert \geq\left\Vert \mathbf{b}_{i}\right\Vert $
for $1\leq i\leq n$ \cite{Zhang2012tsp}. For a Minkowski reduced
basis, it satisfies \cite{Zhang2012tsp} $\left\Vert \mathbf{b}_{i}\right\Vert \leq\max\left\{ 1,(5/4)^{(i-4)/2}\right\} \lambda_{i}(\mathbf{B}),\thinspace1\leq i\leq n.$
Whereas Minkowski reduction is optimal as it reaches all the
successive minima when $n\leq4$, our results in Theorem \ref{thm:ILD-reduction}
only show the SR-CVP reduced basis is not far from the optimal one.
Here we argue that SR-CVP has simpler structure. {\color{black} While
	Minkowski reduction requires solving integer least squares problems with GCD constraints and delicate basis expansion, SR-CVP only involves unconditional CVP solvers and its basis expansion process is trivial. Moreover, the SR-CVP algorithm can be approximately implemented by its many low-complexity siblings in the SR family.}
\end{enumerate}

\subsection{\label{subsec:Complexity-of-SR}Complexity of SR and SR-CVP}

We argue that even when the threshold parameter $\tau=1$, the decrease
from $||\mathbf{b}_{i}||$ to $||\mathbf{b}_{i}-\mathbf{s}_{i}||$
can be finitely counted because a lattice is discrete. Therefore we
define $\epsilon=\sup_{\mathbf{b}_{i},\mathbf{s}_{i}}\frac{||\mathbf{b}_{i}-\mathbf{s}_{i}||}{||\mathbf{b}_{i}||}$
which satisfies $\epsilon<1$. As $\sum_{i=1}^{n}\left\Vert \mathbf{b}_{i}\right\Vert ^{2}$
is no smaller than $\sum_{i=1}^{n}\lambda_{i}^{2}(\mathbf{B})$ while
this metric keeps decreasing for every $n$ iterations, the number
of calls to the CVP oracle is not larger than $n\log(\frac{\left\Vert \mathbf{B}\right\Vert _{\mathrm{F}}^{2}}{\sum_{i=1}^{n}\lambda_{i}^{2}(\mathbf{B})})$
with $\tau\leq1$, where $\left\Vert \mathbf{B}\right\Vert _{\mathrm{F}}^{2}$
denotes the Frobenius norm of the input basis, and the $\log$ function
is over $\min\left\{ 1/\tau,1/\epsilon\right\} $. Therefore we conclude
that the number of iterations in SR is polynomial.

Regarding SR-CVP, since the reduction in each round is quite strong,
we can use the following heuristic implementation to minimize the
number of iterations: \textcolor{black}{first reduce the longest vector (similarly to $\eta$-Greedy), then reduce other basis vectors in descending order with
	$n-1$ rounds of CVP.}  Our simulation results show that this version
of SR-CVP is competitive with Minkowski reduction and boosted KZ reduction.

While we can employ a state-of-the-art implementation for CVP,
its complexity for a random basis is exponential \cite{Jalden2005,AonoN17,DBLP:journals/tit/WangL18}.
In the next section, we will focus on approximate versions of CVP. 

\section{Hash-based Approximation: SR-Hash}

\subsection{The nearest neighbor problem in SR-Pair}

When the $\varepsilon$CVP subroutine is not implemented with an exact
CVP algorithm but rather a pairwise cancellation with the following
form:
\begin{align}
\mathbf{b}_{i} & =\arg\min||\mathbf{b}_{i}^{(j)}||,j=\left\{ 1,\ldots,N\right\} \backslash i,\label{eq:app near}\\
\mathbf{b}_{i}^{(j)} & =\mathbf{b}_{i}-\lfloor\langle\mathbf{b}_{i},\mathbf{b}_{j}\rangle/\langle\mathbf{b}_{j},\mathbf{b}_{j}\rangle\rceil\mathbf{b}_{j},\nonumber 
\end{align}
we refer to the whole algorithm as SR-Pair. This algorithm coincides
with the element-based reduction in \cite{Zhou2013}. Although this
sub-routine only has a complexity in the order of $O(nN)$, reaching
another variant with lower complexity is possible.

Recall the nearest neighbor problem in the field of large dimensional
data processing is: given a list of $n$-dimensional vectors $L=\{\mathbf{v}_{1},\mathbf{v}_{2},\ldots,\mathbf{v}_{N}\}\in\mathbb{R}^{n}$,
preprocess $L$ in such a way that, when later given a target vector
$\mathbf{q}\notin L$, one can efficiently find an element $\mathbf{v}\in L$
which is almost the closest to $\mathbf{q}$. Since Eq. (\ref{eq:app near})
exactly defines a search for the nearest neighbor of $\mathbf{b}_{i}$
among the vectors in $\mathbf{B}$, then it motivates us to reduce
this complexity to $O(n\log N)$ based on locality-sensitive-hashing
(LSH) \cite{Gionis1999,Andoni2006}. 
\begin{rem}
If we choose SIC as the $\varepsilon$CVP subroutine, then along with
LLL preprocessing we have \cite{Babai1986}
\begin{equation}
\left\Vert \mathbf{b}_{i}-\pi_{\mathbf{B}_{\left[n\right]\backslash i}}(\mathbf{b}_{i})-\mathbf{s}_{i}\right\Vert \leq2(2/\sqrt{3})^{n-1}\rho(\mathbf{B})\label{eq:lovasz bound}
\end{equation}
for such an SR-SIC algorithm. However, the computation complexity
of this algorithm is still too high as it requires the pre-processing
by LLL. 
\end{rem}

\subsection{Angular LSH}

LSH roughly works as follows: first all $N$ candidates are dispatched
to different buckets with labels, then when searching the nearest
neighbor of a query point  $\mathbf{q}$, we can alternatively do
this only for $N'$ candidates that have the same label with $\mathbf{q}$,
where $N'\ll N$. There are label functions $f$ which map an $n$-dimensional
vector $\mathbf{v}$ to a low-dimensional sketch of $\mathbf{v}$.
For certain distance function $D$, vectors which are nearby in the
sense of $D$ have a high probability of having the same sketch, while
vectors which are far away have a low probability of having the same
image under $f$. 

To reach this property, we introduce the definition of an LSH family
$\mathcal{F}$. 
\begin{defn}
A family $\mathcal{F}=\{f:\mathbb{R}^{n}\rightarrow\mathbb{N}\}$
of hash functions is said to be $(r_{1},r_{2},p_{1},p_{2})$-sensitive
for a similarity measure $D$ if for any $\mathbf{u},\mathbf{v}\in\mathbb{R}^{n}$,
we have i) If $D(\mathbf{u},\mathbf{v})\leq r_{1}$, then $\mathrm{Pr}_{f\in\mathcal{F}}(f(\mathbf{u})=f(\mathbf{v}))\geq p_{1}$;
ii)If $D(\mathbf{u},\mathbf{v})\geq r_{2}$, then $\mathrm{Pr}_{f\in\mathcal{F}}(f(\mathbf{u})=f(\mathbf{v}))\leq p_{2}$. 

For the sake of constructing a hash family with $p_{1}\approx1$ and
$p_{2}\approx0$, normally one first constructs $p_{1}\approx p_{2}$
and then uses the so called AND- and OR-compositions to turn it into
an $(r_{1},r_{2},p_{1}',p_{2}')$-sensitive hash family $\mathcal{F}'$
with $p_{1}'>p_{1}$ and $p_{2}'<p_{2}$, thereby amplifying the gap
between $p_{1}$ and $p_{2}$. Specifically, by combining $k$ AND-compositions
and $t$ OR-compositions, we can turn an $(r_{1},r_{2},p_{1},p_{2})$-sensitive
hash family $\mathcal{F}$ into an $(r_{1},r_{2},1-\left(1-p_{1}^{k}\right)^{t},1-\left(1-p_{2}^{k}\right)^{t})$-sensitive
hash family $\mathcal{F}'$. As long as $p_{1}>p_{2}$, we can always
find values of $k$ and $t$ such that $1-\left(1-p_{1}^{k}\right)^{t}\rightarrow1$
and $1-\left(1-p_{2}^{k}\right)^{t}\rightarrow0$. 

Note that if given a hash family $\mathcal{H}$ which is $(r_{1},r_{2},p_{1},p_{2})$-sensitive
with $p_{1}\gg p_{2}$, then we can use $\mathcal{F}$ to distinguish
between vectors which are at most $r_{1}$ away from $\mathbf{v}$,
and vectors which are at least $r_{2}$ away from $\mathbf{v}$ with
non-negligible probability, by only looking at their hash values.
Although large values of $k$ and $t$ can amplify the gap between
$p_{1}$ and $p_{2}$, large parameters come at the cost of having
to compute many hashes and having to store many hash tables in memory.
To minimize the overall time complexity, we need the following lemma
that shows how to balance $k$ and $t$. In practice, we can further
tune $k$ and $t$ to have the best performance.
\end{defn}
\begin{lem}[\cite{IndykM98,Laarhoven2015}]
Suppose there exists an $(r_{1},r_{2},p_{1},p_{2})$-sensitive family
$\mathcal{F}$. For a list $L$ of size $N$, let 
\[
\rho=\frac{\log p_{1}^{-1}}{\log p_{2}^{-1}},\thinspace k=\frac{\log N}{\log p_{2}^{-1}},\thinspace t=O(N^{\rho}).
\]
 Then given a query point $\mathbf{q}$, with high probability we
can either find an element $\mathbf{v}\in L$ such that $D(\mathbf{q},\mathbf{v})\leq r_{2}$,
or conclude that with high probability, no element $\mathbf{v}\in L$
with $D(\mathbf{q},\mathbf{v})>r_{1}$ exist, with the following costs:
i) Time for preprocessing the list: $O(kN^{1+\rho})$; ii) Space complexity
of the preprocessed data: $O(N^{1+\rho})$; iii) Time for answering
a query: $O(N^{\rho})$.
\end{lem}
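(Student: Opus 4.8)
The plan is to realize the family $\mathcal{F}$ through the AND/OR composition described above and then analyze a standard bucketing-and-scanning query. Concretely, I would draw $k\cdot t$ functions from $\mathcal{F}$ independently and group them into $t$ composite hash functions $g_1,\ldots,g_t$, where each $g_\ell=(f_{\ell,1},\ldots,f_{\ell,k})$ is the AND-composition of $k$ base functions, so that two points share a bucket under $g_\ell$ exactly when they agree on all $k$ coordinates. In preprocessing, every $\mathbf{v}\in L$ is hashed into its bucket in each of the $t$ tables. To answer a query $\mathbf{q}$, I would retrieve the bucket $g_\ell(\mathbf{q})$ from every table, compute the true distance $D(\mathbf{q},\cdot)$ for the retrieved candidates, and return the first candidate within $r_2$, aborting the scan once a fixed number $O(t)$ of candidates has been examined. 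The whole argument then reduces to two probabilistic claims: (a) if some $\mathbf{v}^\star\in L$ satisfies $D(\mathbf{q},\mathbf{v}^\star)\le r_1$, then $\mathbf{v}^\star$ lands in the same bucket as $\mathbf{q}$ in at least one table with constant probability; and (b) the number of \emph{far} points (those with $D(\mathbf{q},\cdot)\ge r_2$) that collide with $\mathbf{q}$ is $O(t)$ in expectation, so the capped scan does not discard the genuine neighbor.

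The heart of the proof is to verify that the prescribed $k$ and $t$ make (a) and (b) compatible. Under a single composite function a near point collides with probability at least $p_1^{k}$ and a far point with probability at most $p_2^{k}$. Substituting $k=\log N/\log p_2^{-1}$ gives $p_2^{k}=1/N$, so the expected number of far collisions \emph{per table} is at most $N\cdot p_2^{k}=1$ and hence $O(t)$ over all tables, establishing (b) via Markov's inequality. For the same $k$, a direct computation yields
\[
p_1^{k}=\exp\!\Big(\tfrac{\log N}{\log p_2^{-1}}\log p_1\Big)=\exp\!\big(-\rho\log N\big)=N^{-\rho},
\]
so the probability that $\mathbf{v}^\star$ is missed in all $t$ tables is $(1-N^{-\rho})^{t}$. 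Taking $t=O(N^{\rho})$ makes this at most a constant bounded away from $1$, which proves (a); repeating the construction a constant number of times (or enlarging $t$ by a constant factor) amplifies the success probability arbitrarily close to $1$.

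The cost accounting then follows mechanically. Preprocessing evaluates $k$ base hashes for each of the $t$ tables and each of the $N$ points, i.e. $O(kNt)=O(kN^{1+\rho})$ time, matching (i); each point is stored once per table, giving $O(Nt)=O(N^{1+\rho})$ space, matching (ii); and a query computes $t$ composite hashes and scans $O(t)=O(N^{\rho})$ candidates, matching (iii). I expect the delicate point to be claim (b) together with the capped scan: one must argue that bounding the total number of inspected candidates by $O(t)$ does not throw away the true neighbor, which requires combining the expectation bound on far collisions (Markov) with the independent success event in (a) so that both occur simultaneously with constant probability. Balancing these two failure modes is exactly what pins down the choice $k=\log N/\log p_2^{-1}$ and $t=O(N^{\rho})$, and hence the exponent $\rho$.
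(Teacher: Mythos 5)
Your proposal is correct: it is the standard Indyk--Motwani analysis, and the key calculations ($p_2^{k}=1/N$ giving $O(t)$ expected far collisions, $p_1^{k}=N^{-\rho}$ giving constant success probability over $t=O(N^{\rho})$ tables, and the resulting time/space accounting) are all right. Note, however, that the paper does not prove this lemma at all --- it is imported verbatim from \cite{IndykM98,Laarhoven2015} --- so there is no in-paper argument to compare against; your proof simply supplies the standard justification, built on the same AND/OR-composition machinery the paper introduces just before the lemma, and your closing remark about combining the Markov bound on far collisions with the near-point collision event via a union bound is exactly the delicate step that makes the capped scan sound.
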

In the sequel, we examine the implementation of LSH based on angular
hashing. Angular hashing means generating random hyperplanes $\mathbf{h}_{1},\ldots,\mathbf{h}_{k}$,
such that the whole space is sliced into $2^{k}$ regions. After that,
to find the nearest neighbor of $\mathbf{q}$, one only compares $\mathbf{q}$
to points in the same region $\mathcal{R}$. Here we introduce the
angular distance similarity function

\[
D(\mathbf{u},\mathbf{v})=\arccos\left(\frac{\mathbf{u}^{\top}\mathbf{v}}{\left\Vert \mathbf{u}\right\Vert \left\Vert \mathbf{v}\right\Vert }\right).
\]
 With this measure two vectors are nearby if their common angle is
small. Its corresponding hash family is defined by
\[
\mathcal{F}=\left\{ f_{\mathbf{a}}:\thinspace\mathbf{a}\in\mathbb{R}^{n},\left\Vert \mathbf{a}\right\Vert =1\right\} ,f_{\mathbf{a}}\left(\mathbf{v}\right)=\begin{cases}
1 & \mathrm{if}\thinspace\mathbf{a}^{\top}\mathbf{v}\geq0;\\
0 & \mathrm{if}\thinspace\mathbf{a}^{\top}\mathbf{v}<0.
\end{cases}
\]
 Intuitively, the space that is orthogonal to $\mathbf{a}$ defines
a hyperplane, and $f_{\mathbf{a}}$ maps the two regions separated
by this hyperplane to different bits. In particular, for any two angles
$\theta_{1}<\theta_{2}$, the family $\mathcal{F}$ is $(\theta_{1},\theta_{2},1-\frac{\theta_{1}}{\pi},1-\frac{\theta_{2}}{\pi})$-sensitive.
Further with $k$ AND- and $t$ OR- compositions, we have $(\theta_{1},\theta_{2},1-\left(1-\left(1-\frac{\theta_{1}}{\pi}\right)^{k}\right)^{t},1-\left(1-\left(1-\frac{\theta_{2}}{\pi}\right)^{k}\right)^{t})$-sensitive
hash family.

To illustrate LSH and in particular the angular LSH method described
above, Fig. \ref{fig: lshdemo} shows how hyperplane hashing might
work in a 2-D setting. In the figure, we have a list of $8$ candidates:
$L=\left\{ \mathbf{b}_{1},\ldots,\mathbf{b}_{8}\right\} $, and we
use $k=2$ hyperplanes for $t=2$ hash tables. Each table stores the
hash keys (labels) along with elements being placed in buckets, where
elements having the same keys will be placed in the same buckets.
In the two tables, the AND-compostions of $11$ respectively correspond
to $\mathbf{b}_{1},\mathbf{b}_{2},\mathbf{b}_{3}$ and $\mathbf{b}_{1}$.
Based on OR-composition, the nearest neighbor of $\mathbf{b}_{1}$
is found inside $\left\{ \mathbf{b}_{2},\mathbf{b}_{3}\right\} $.

\begin{figure}[tbh]
\center

\includegraphics[width=3in]{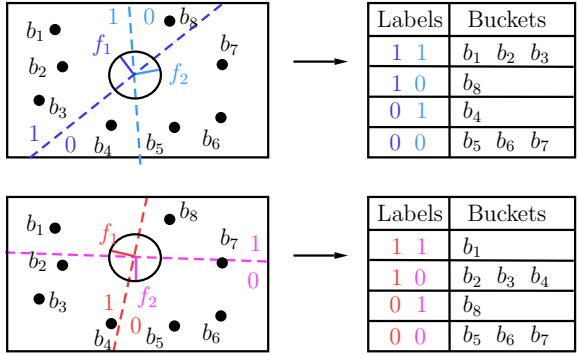}

\caption{Demonstration of LSH.}
\label{fig: lshdemo}
\end{figure}

\subsection{LSH-Based Reduction}

Now we show how to incorporate LSH into the sequential reduction algorithm.
The pseudo-codes of SR-Hash are presented in Algorithm \ref{algoLSHELR}.
It first hashes all vectors in lines 2-3. Then inside the loop of
the element-based reduction, the search for finding the nearest neighbor
of $\mathbf{b}_{i}$ is within $\mathbf{C}=\cup_{l=1}^{t}T_{l}\left(f_{l}\left(\pm\mathbf{b}_{i}\right)\right)$.
Every time when a shorter $\mathbf{b}_{i}$ is found, its hash labels
and positions in buckets are updated in lines 11 and 13.

In summary, SR-Hash can be seen as a generalization of the naive brute-force
search inside SR-Pair for finding nearest neighbors, as $k=0$, $t=1$
corresponds to checking all other basis vectors for nearby vectors,
while increasing both $k$ and $t$ leads to fewer comparisons but
a higher cost of computing hash keys and checking buckets.

   \begin{algorithm}  	\KwIn{original lattice basis $\mathbf{B}=[\mathbf{b}_1,\ldots, \mathbf{b}_n]$, complexity threshold   $\tau$, LSH parameters $t$, $k$.}  	 \KwOut{Reduced basis $[\mathbf{b}_{1},\ldots,\mathbf{b}_{n}]$ of   $\Lambda$.}   	 Initialize $t$ empty hash tables $\left(T_l\right)_{l=1}^{t}$, each has $k$ random hash functions  $f_{l,1},\ldots f_{l,k} \in \mathcal{F}$\; 	 \For{$i=1\cdots n$} 	 {
Add $\mathbf{b}_i$ to all hash tables $\left(T_l\right)_{l=1}^{t}$, with hash values $\left(f_l\left(\mathbf{b}_{i}\right)\right)_{l=1}^{t}$ and vectors in the same bucket noted as $T_l\left(f_l\left(\mathbf{b}_{i}\right)\right)$;} 	
$i=0$, $m = 1$\;  
\While{$m \leq n$}{ 
$i \leftarrow (i~\mathrm{mod}~n)+1$; \Comment{The column index}\; 
	Obtain the set of candidates $\mathbf{C}=\cup_{l=1}^{t} T_l\left(f_l\left(\pm \mathbf{b}_{i}\right)\right)$\; 	 	$\mathbf{c}_{l}=\arg\min_{\mathbf{c}_{l}\in \mathbf{C}}\left\Vert {\mathbf{b}_{i}}-\lfloor\langle \mathbf{b}_{i},\mathbf{c}_{l}\rangle/\langle\mathbf{c}_{l},\mathbf{c}_{l}\rangle\rceil \mathbf{c}_{l}\right\Vert ^{2}$\;  	
$\mathbf{s}_i=\lfloor\langle \mathbf{b}_{i},\mathbf{c}_{l}\rangle/\langle\mathbf{c}_{l},\mathbf{c}_{l}\rangle\rceil \mathbf{c}_{l}$\;
	    \If{$ ||\mathbf{b}_{i}-\mathbf{s}_i||^2< \tau||\mathbf{b}_{i}||^2$} {  	 	    	Remove $\mathbf{b}_i$ from all hash tables\; 	 	    	$\mathbf{b}_{i} \leftarrow \mathbf{b}_{i}-\mathbf{s}_i$\; 	 	    	Add $\mathbf{b}_i$ to all hash tables\; 	 	    		  $m=1$\;  }       	    		  \Else{$m \leftarrow m +1$\;} }           \caption{The SR-Hash   algorithm.}        \label{algoLSHELR}     \end{algorithm}

\subsection{\label{subsec:Comparisons-with-weak}Discussions}
\begin{enumerate}
\item Comparison with SR-CVP. Here we emphasize that SR-Pair/SR-Hash is
only a weak approximation for SR-CVP, and these low complexity algorithms
may have quite inferior performance. Consider the counter example
given for ELR \cite{Zhou2013} (the same as SR-Pair). Clearly SR-Pair/SR-Hash
is unable to reduce a basis whose Gram matrix is
\begin{eqnarray*}
\mathbf{G} & = & \left[\begin{array}{ccc}
1 & 0.5-\nu & 0.5-\nu\\
0.5-\nu & 1 & -0.5+\nu\\
0.5-\nu & -0.5+\nu & 1
\end{array}\right]
\end{eqnarray*}
with $\nu\rightarrow0$. Under spherical coordinate system of $(r,\varrho,\varphi)$
with $r=1$, $\varrho=\pi/3$, and $\varphi=\pi/2-\nu$, the lattice
basis $\mathbf{A}$ corresponded to $\mathbf{G}$ (up to a unitary
transform) is given by
\begin{equation}
\mathbf{A}=\left[\begin{array}{ccc}
\sin\varphi\cos(\pi/3) & -\sin\varphi\cos(\pi/3) & 1\\
\sin\varphi\sin(\pi/3) & \sin\varphi\sin(\pi/3) & 0\\
\cos\varphi & -\cos\varphi & 0
\end{array}\right].\label{eq:badmatrix}
\end{equation}
This basis has an angle $\theta_{i}<\nu\rightarrow0$ between any
$\mathbf{a}_{i}$ and $\mathrm{span}(\mathbf{A}_{[3]\backslash i})$,
and $\eta(\mathbf{A})= \infty $   if $\nu\rightarrow0$.
If $\bm{A}$ is reduced by using SR-CVP, we have $\theta_{\max}\geq\pi/4$
according to Theorem \ref{thm:An-SR-CVP-reducedANGLE}, so $\mathbf{A}$
is not a stable basis for SR-CVP. Moreover, the actual reduced basis
has the following form:
\[
\tilde{\mathbf{A}}=\left[\begin{array}{ccc}
2\sin\varphi\cos(\pi/3)-1 & -\sin\varphi\cos(\pi/3) & 1\\
0 & \sin\varphi\sin(\pi/3) & 0\\
2\cos\varphi & -\cos\varphi & 0
\end{array}\right].
\]
Its OD is 
\[
\eta(\tilde{\mathbf{A}})=\frac{\sqrt{4\cos^{2}\varphi+\sin^{2}\varphi-2\sin\varphi+1}}{\sqrt{3}\sin\varphi\cos\varphi};
\]
when given $\varphi=\pi/2-10^{-4}$, we have $\eta(\tilde{\mathbf{A}})|_{\varphi=\pi/2-10^{-4}}=1.1547$.
\textit{Therefore, the proposed low-complexity SR algorithms are only
feasible for bases whose input vectors are dense in some directions}.
Our simulation results and Appendix \ref{sec:Properties-of-element-based}
will show that the dual lattice basis in MIMO detection is one example
of this.
\item Comparison with LLL and its variants \cite{Ling2010,Wen2014b,DBLP:journals/icl/WenC17}.
Note that the worst case complexity of LLL for bases in the real field
is unbounded \cite{Jalden2008}, and the variants that control the
order of swaps or a selective implementation of size reduction cannot
remove this curse. On the contrary, SR variants with a polynomial
time $\varepsilon\mathrm{CVP}$ routine can enjoy the overall polynomial-time
complexity. Regarding performance bounds, LLL and its variants (the
maintains the Siegel condition and size reduction condition) often
have bounds of the form $l\left(\mathbf{B}\right)\leq2^{n-1}\lambda_{n}\left(\mathbf{B}\right)$,
while SR-Pair and SR-Hash are heuristic. 
\item Comparison with Seysen reduction \cite{Seysen1993}. Rather than minimizing
the orthogonality defect of a basis, a metric called Seysen\textquoteright s
measure can reflect whether both the primal and dual bases are short:
$\sum_{i=1}^{n}\left\Vert \mathbf{b}_{i}\right\Vert ^{2}\left\Vert \mathbf{d}_{i}\right\Vert ^{2}$.
Seeking for the global minimum of this metric is extremely hard; when
referring to Seysen\textquoteright s algorithm \cite{Seethaler2007},
it is the one that finds a local minimum of $\sum_{i=1}^{n}\left\Vert \mathbf{b}_{i}\right\Vert ^{2}\left\Vert \mathbf{d}_{i}\right\Vert ^{2}$
without any theoretical performance guarantee. Similarly to SR-Pair,
Seysen\textquoteright s algorithm performs basis updates in a pair-wise
manner:
\begin{align*}
\mathbf{b}_{j} & =\mathbf{b}_{j}+c_{i,j}\mathbf{b}_{i},i\neq j,
\end{align*}
 with $c_{i,j}=\lfloor\frac{1}{2}\left(\frac{\langle\mathbf{d}_{i},\mathbf{d}_{j}\rangle}{\left\Vert \mathbf{d}_{i}\right\Vert ^{2}}-\frac{\langle\mathbf{b}_{i},\mathbf{b}_{j}\rangle}{\left\Vert \mathbf{b}_{i}\right\Vert ^{2}}\right)\rceil$.
Due to the additional inner product calculation in the dual basis,
Seysen's algorithm is more complicated than SR-Pair, and it does not
support the hash-based implementation. Moreover, in large ($\geq35$)
dimensions Seysen's algorithm often halts at a local minimum \cite[P.375]{Seysen1993}.
Since the error rate performance is only controlled by the length
of the dual basis, our empirical results also show that Seysen's algorithm
is not competitive for large dimensions.
\end{enumerate}

\section{\textcolor{black}{Simulation results}}

\subsection{Performance of SR-CVP}
\begin{figure}[tbh]
\center
\subfloat[Primal bases.]{\includegraphics[width=0.25\textwidth,height=0.4\textwidth]{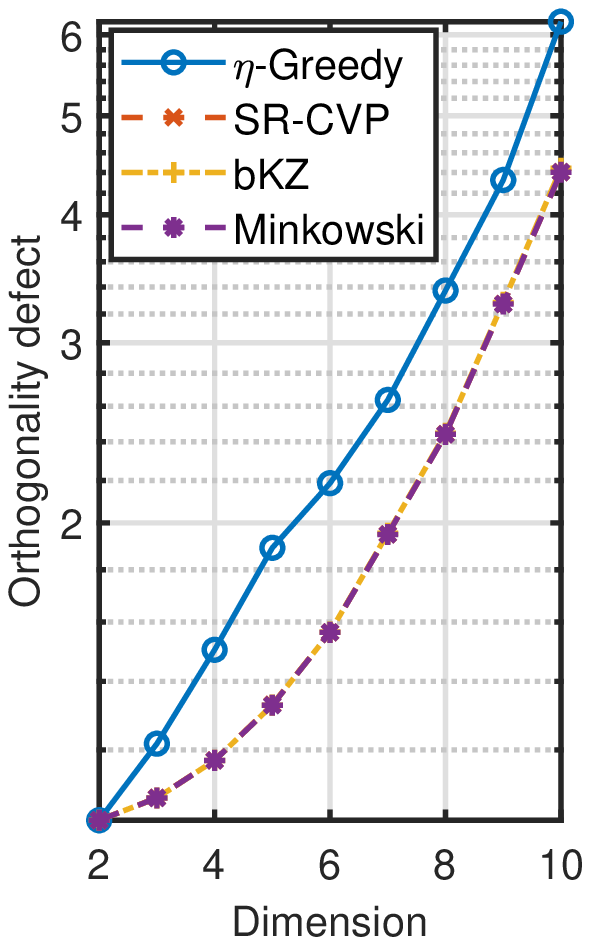}

}\subfloat[Dual bases.]{\includegraphics[width=0.25\textwidth,height=0.4\textwidth]{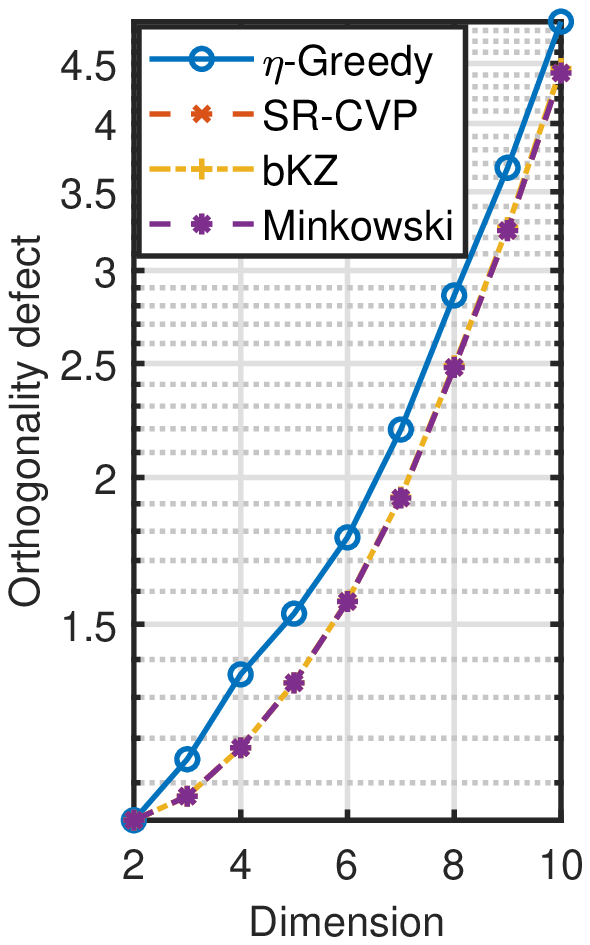}

}\caption{The orthogonal defects of different types of strong reduction.}
\label{fig:odstrong}
\end{figure}

Hereby we employ the OD's to compare SR-CVP with other strong lattice
reduction algorithms, including the boosted Korkin-Zolotarev reduction
noted as ``bKZ'', the Minkowski reduction noted as ``Minkowski'',
and the $\eta$-Greedy reduction \cite[Fig.5]{Nguyen2012} noted as
``$\eta$-Greedy''. Results are averaged over $1\times10^{4}$ Monte-Carlo
runs, and SR-CVP is implemented by the heuristic version in subsection
\ref{subsec:Complexity-of-SR}.

Fig. \ref{fig:odstrong} plots dimension versus OD for distinct algorithms
for the primal and dual of a Gaussian random matrix with entries from
$\mathcal{N}\left(0,1\right)$, respectively. The figure shows that
 ODs of SR-CVP, bKZ and Minkowski reduced bases are almost indistinguishable.
$\eta$-Greedy has the worst performance as expected, because it is
not designed to minimized all basis vectors. Since Minkowski reduction
is the state-of-the-art algorithm for generating the shortest basis
in practice, our results show that SR-CVP  practically reaches
 optimality as well. 

Fig. \ref{fig:odstrongFLOP} plots the averaged number of CVP runs
in Fig. \ref{fig:odstrong} when using $\eta$-Greedy and SR-CVP.
It is known that both Minkowski and bKZ cost around $n$ oracles for
the shortest vector problem (SVP) or CVP. Fig. \ref{fig:odstrongFLOP}
reflects that SR-CVP actually needs fewer than $n$ rounds of CVP,
and is only slightly more complicated than $\eta$-Greedy.

\begin{figure}[tbh]
\center

\subfloat[Primal bases.]{\includegraphics[width=0.25\textwidth,height=0.4\textwidth]{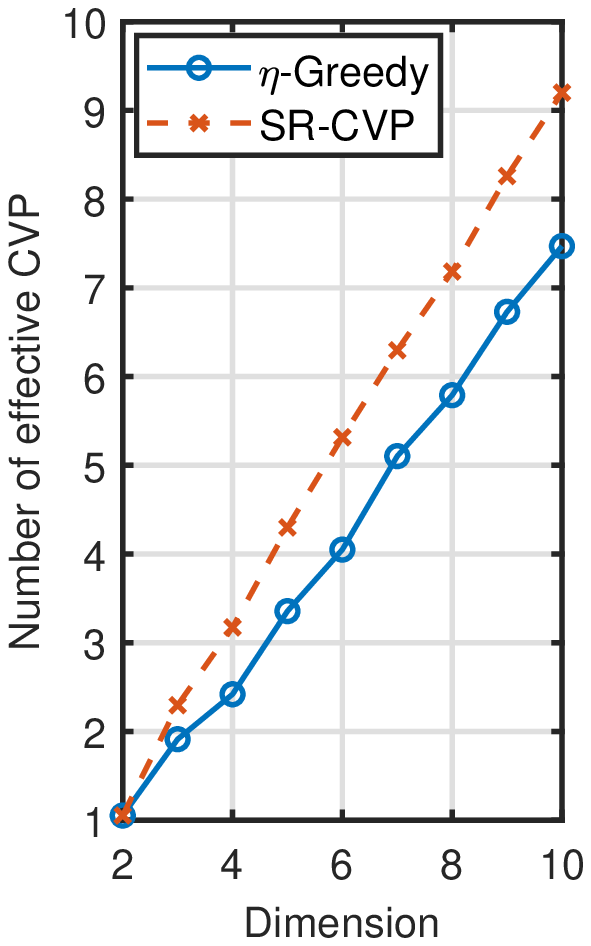}

}\subfloat[Dual bases.]{\includegraphics[width=0.25\textwidth,height=0.4\textwidth]{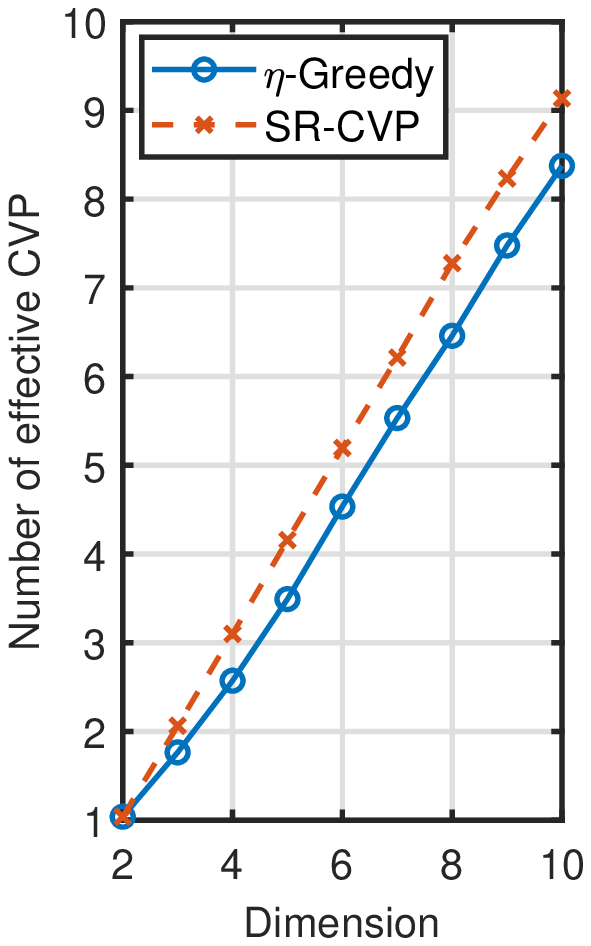}

}\caption{The number of effective CVP runs in $\eta$-Greedy and SR-CVP.}
\label{fig:odstrongFLOP}
\end{figure}

\subsection{SR-Hash vs. SR-Pair and LLL variants}

In this subsection, we study the complexity/performance tradeoffs
of different types of weak lattice reduction. The modulation is set
as $16$ QAM, and the results are obtained from $1\times10^{4}$ Monte
Carlo runs. We denote the zero-forcing detector by ``ZF'',
the successive interference cancellation detector by ``SIC'',  and lattice-reduction-aided detectors with prefixes: ``LLL-SIC/ZF'' \cite{Ling2011}, ``bLLL-SIC/ZF''
\cite{Lyu2017}, ``SR-Pair-SIC/ZF'' (this paper), ``SR-Hash-SIC/ZF'' (this paper), \textcolor{black}{and ``Seysen-SIC/ZF'' \cite{ma2010seysen}. Here comparisons are made {for major lattice-reduction-aided methods} in large-scale MIMO systems, because they represent pre-processing based methods that may attain the diversity order of ML detection \cite{Wuebben2011,Taherzadeh2007b}.}  

\subsubsection{i.i.d. channels}

Assume that each entry of the channel matrix is chosen from a standard
normal distribution $\mathcal{CN}\left(0,1\right)$. Fig. \ref{FIG:BER1}
plots the bit error rate (BER) performance of different uncoded MIMO
detectors in a real domain $2n_{T}\times2n_{R}=60\times60$ MIMO system.
Here the linear detectors are implemented with the MMSE criterion.
Parameters in LSH are chosen as $t=\lfloor n^{0.585}\rceil=11$, $k=\lfloor\log n\rceil=6$. 

In the high SNR region of Fig. \ref{FIG:BER1}-(a), we observe that,
in addition to the well-known fact that ZF, SIC and Seysen-SIC fail to achieve the
full diversity order, b-LLL-SIC, SR-Pair-SIC and SR-Hash-SIC all attain
approximately 1dB gain over LLL-SIC. As for Fig. \ref{FIG:BER1}-(b),
the variants of SR both outperform conventional and boosted LLL algorithms.
Both sub-figures indicate that SR-Hash gets very close to SR-Pair.

\begin{figure}[htb]
\center

\subfloat[Lattice-reduction-aided SIC detectors.]{\includegraphics[width=0.4\textwidth]{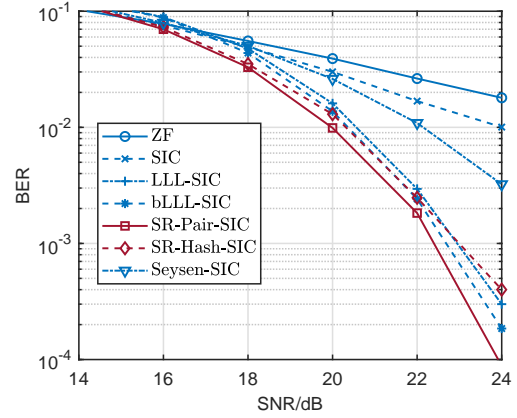}

}

\subfloat[Lattice-reduction-aided ZF detectors.]{\includegraphics[width=0.4\textwidth]{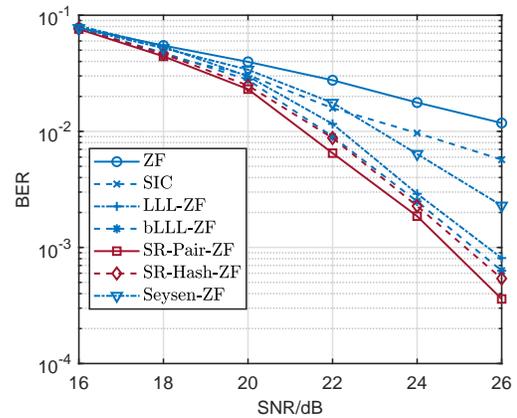}

}

\caption{\textcolor{black}{The BER performance of different detectors in large MIMO.}}
\label{FIG:BER1}
\end{figure}
The complexity of implementing the lattice reduction algorithms is
plotted in Fig. \ref{fig:compexityFig4}, where sub-figure (a) is
for the effective channel matrix under the MMSE criterion, and sub-figure
(b) under the ZF criterion. Considering the difficulty in analyzing
the number of floating-point operations for hash operations, here
we measure the complexity by the number of vector comparisons. This
equals to the number of iterations times: the size of the basis for
SR-Pair, the number of vectors in the same buckets for SR-Hash, and
to the size of vectors for doing size-reductions for both LLL and
bLLL. From Fig. \ref{fig:compexityFig4}-(a), we observe that the
LLL variants are not affected by SNR in the MMSE matrix, and Seysen, SR-Pair and
SR-Hash gradually increase with the rise of SNR. This shows the complexity
of Seysen, SR-Pair and SR-Hash are dependent on the quality of the input bases.
Regarding the stationary lines in Fig. \ref{fig:compexityFig4}-(b),
the numbers of comparisons of Seysen, SR-Pair and SR-Hash reflect the asymptotic
values of their counter-parts in Fig. \ref{fig:compexityFig4}-(a).
Both subfigures reveal that the hash method
helps to reduce the complexity of SR-Pair significantly.
A natural question that arises here is whether the complexity dependency of SR-Pair\&SR-Hash on input bases may lead to inferior performance at low SNR. To address this question, we plot the SNR versus OD relations of different reduction algorithms in Fig. \ref{fig:odmmse}. We observe from the figure that even at low SNR, SR-Pair\&SR-Hash featuring low complexity still outperform Seysen and LLL in terms of OD.

\begin{figure}[htb]
\center

\subfloat[MMSE criterion.]{\includegraphics[width=0.25\textwidth,height=0.4\textwidth]{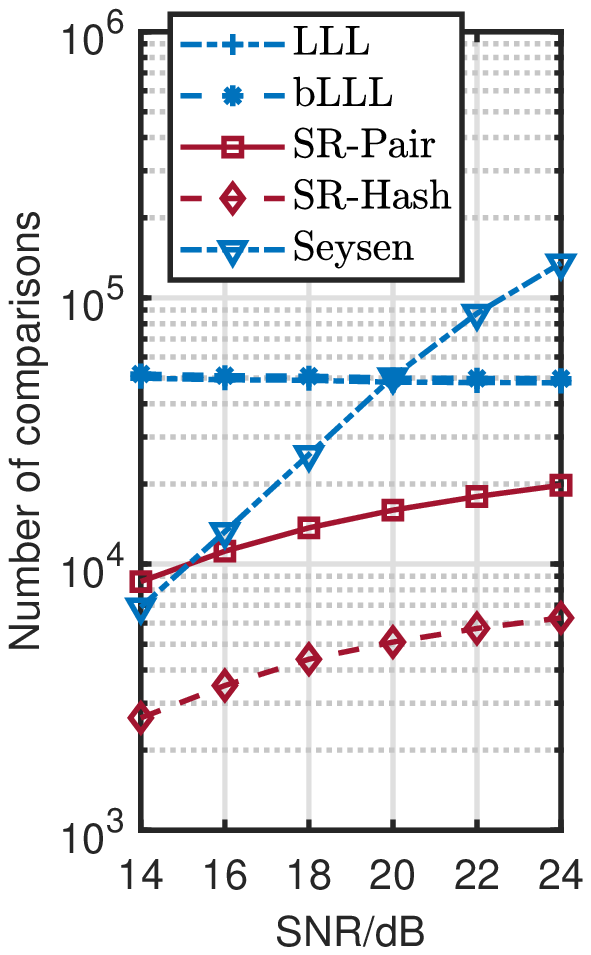}

}\subfloat[ZF criterion.]{\includegraphics[width=0.25\textwidth,height=0.4\textwidth]{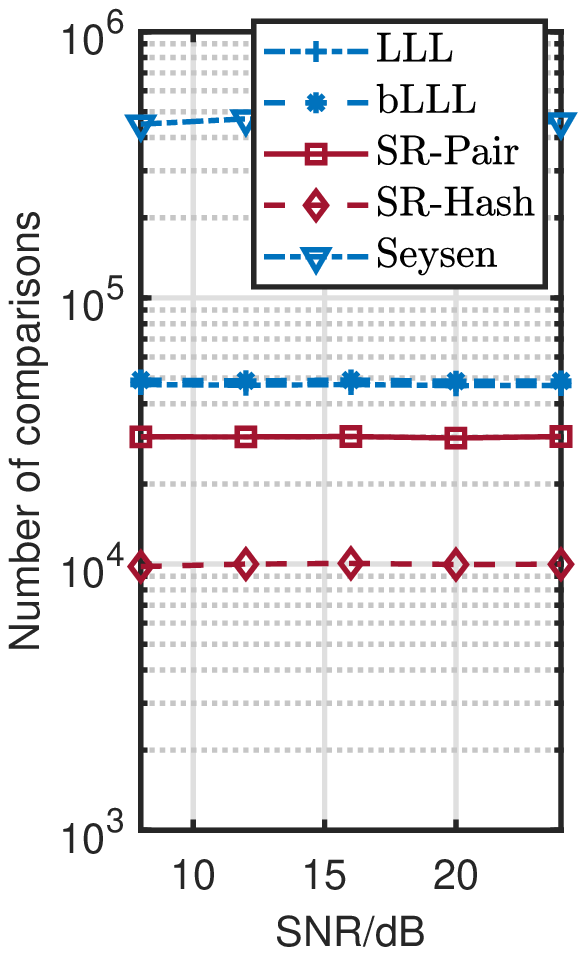}

}\caption{\textcolor{black}{The complexity of different lattice reduction algorithms in i.i.d. channels.}}
\label{fig:compexityFig4}
\end{figure}

 \begin{figure}[htb]
	\center
	
	\includegraphics[width=0.4\textwidth]{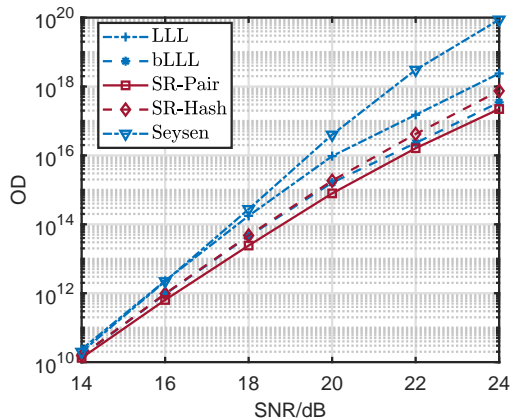}
	
	\caption{\textcolor{black}{The ODs of MMSE matrices reduced by different algorithms.}}
	\label{fig:odmmse}
\end{figure}

\subsubsection{Correlated Channels}

Results in the last example were obtained for i.i.d. frequency-flat
Rayleigh fading channels. The performance of MIMO systems in realistic
radio environments however sometimes depends on spatial correlation.
Therefore, we investigated the effect of channel correlation on the
performance of the new reduction algorithms. Based on \cite{DBLP:journals/tvt/0001KC07},
the spatially correlated channel is modeled as 
\[
\widetilde{\mathbf{B}_{c}}=\Psi\mathbf{B}_{c},
\]
 where $\Psi\in\mathbb{R}^{n_{R}\times n_{R}}$ is the correlation
matrix defined by 
\[
\Psi=\left[\begin{array}{cccc}
1 & \rho & \ldots & \rho^{n_{R}-1}\\
\rho & 1 & \ldots & \rho^{n_{R}-2}\\
\vdots & \vdots & \ddots & \vdots\\
\rho^{n_{R}-1} & \rho^{n_{R}-2} & \ldots & 1
\end{array}\right],
\]
and $\rho$ refers to the spatial correlation coefficient. 

With the same chosen parameters in the algorithm as those for i.i.d.
channels, Fig. \ref{FIG:BER1-1} demonstrates the BER performances
against SNR in correlated channels respectively with $\rho=0.1$ and
$\rho=0.3$. It reveals that, as $\rho$ increases, the SR aided detectors
suffer from more severe performance degradation than the LLL aided
methods, although the BER gaps between SR variants and LLL variants are very small. This is not unexpected because we do have examples showing
SR-Pair/SR-Hash cannot reduce certain matrices (e.g., the matrix in (\ref{eq:badmatrix})).
Lastly, as plotted in Fig. \ref{fig:compexityFig4-1}, the complexity
of SR-Pair/SR-Hash is still much lower than those of LLL variants and Seysen,
and the proposed SR-Hash has much lower complexity than SR-Pair.

\begin{figure}[tbh]
\center

\subfloat[ $\rho=0.1$.]{\includegraphics[width=0.4\textwidth]{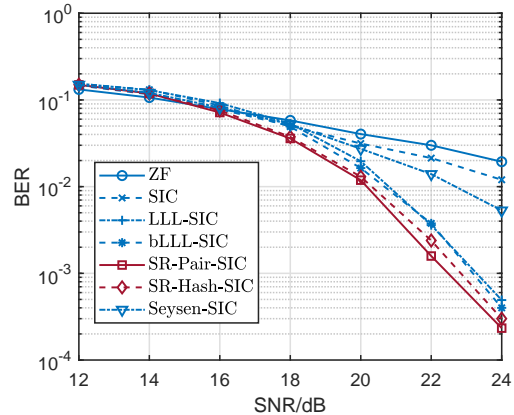}

}

\subfloat[ $\rho=0.3$.]{\includegraphics[width=0.4\textwidth]{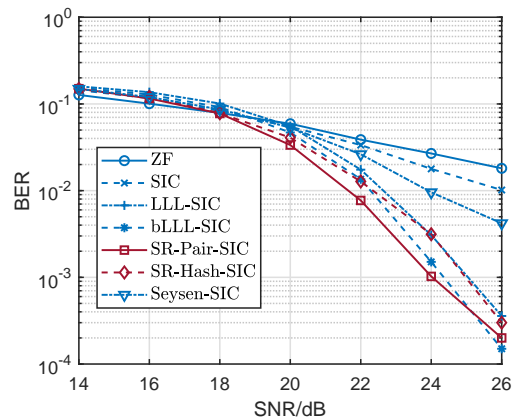}

}

\caption{\textcolor{black}{The BER performance of lattice-reduction-aided SIC detectors in correlated
		channels with $\rho=0.1$ and $\rho=0.3$.} }
\label{FIG:BER1-1}
\end{figure}
\begin{figure}[tbh]
\center

\subfloat[ $\rho=0.1$.]{\includegraphics[width=0.25\textwidth,height=0.4\textwidth]{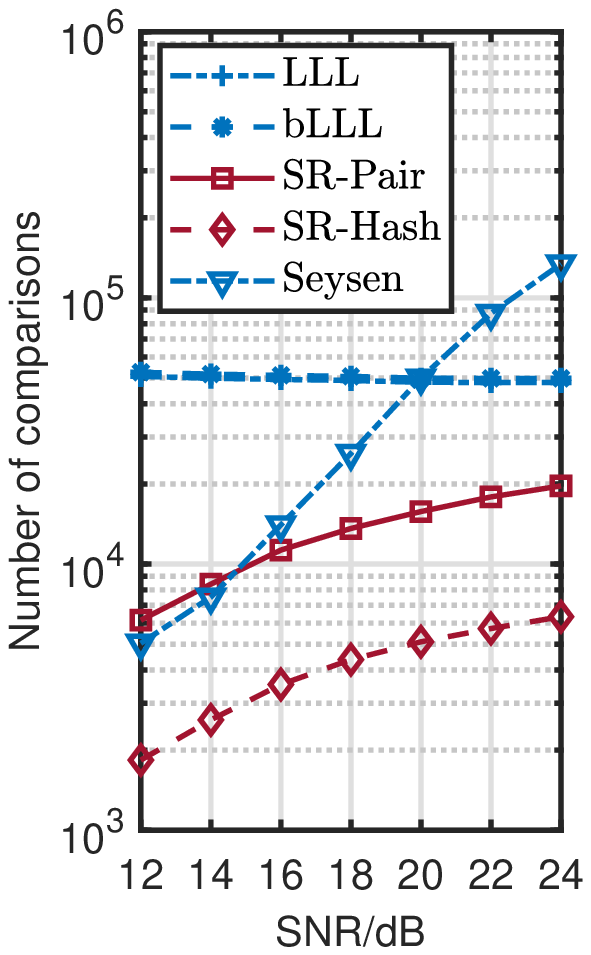}

}\subfloat[ $\rho=0.3$.]{\includegraphics[width=0.25\textwidth,height=0.4\textwidth]{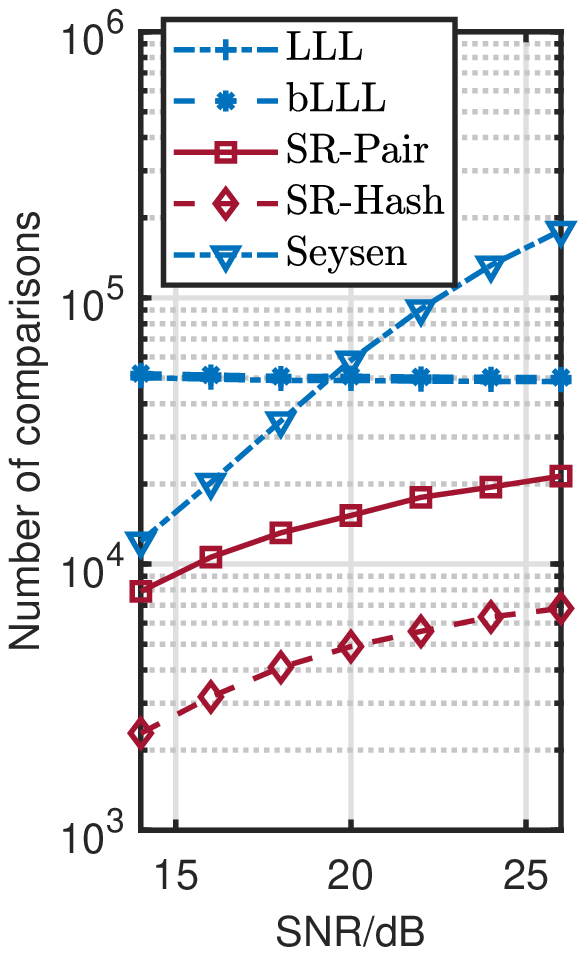}

}\caption{\textcolor{black}{The complexity of lattice reduction algorithms in correlated channels
	with $\rho=0.1$ and $\rho=0.3$.}}
\label{fig:compexityFig4-1}
\end{figure}

\section{\color{black}Conclusions and Future Work}

To summarize, we have unveiled a new lattice reduction family  called sequential reduction, which enjoys a polynomial number of iterations.
Theoretical bounds on basis lengths and orthogonality defects are derived
under the premise that an exact CVP subroutine has been invoked. Though
we only manage to prove these results for small dimensions, they still
provide insights on understanding the performance of such a class
of algorithms. Within the SR framework, the SR-Hash method can serve
as an effective subprogram, and simulation results show that the complexity-performance
trade-off outperforms those of SR-Pair and LLL variants in large MIMO
detection.  

{\color{black} We believe that the studies we initiated here, only scratch
	the tip of the iceberg about the new lattice reduction family. Many important questions remain to be answered. Research on the interactions and combinations of SR with other techniques such as floating-point arithmetic \cite{DBLP:conf/eurocrypt/NguenS05,ChangSV12}, randomized detection algorithms \cite{XieXW19,Liu2011it},  success probability analysis \cite{WenC17,ChangWX13}, and numerous other topics is now being pursued.
}


\appendices{}

\section{\label{sec:Properties-of-element-based}On the type of bases feasible
for SR-Pair\&SR-Hash}

We first argue that a large dimensional Gaussian random basis is always
SR-CVP reduced, and thus being SR-Pair and SR-Hash reduced. The inability
to change such bases is however not a problem because these bases
are close to being orthogonal.

\begin{prop}
\label{prop:nonReduceSLR}For a Gaussian random basis whose entries
follow the distribution $\mathcal{N}\left(0,1\right)$, the probability
that it is not SR-CVP reduced goes to zero as $n\rightarrow\infty$.
\end{prop}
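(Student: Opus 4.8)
The plan is to bound the probability of the complementary (failure) event by a first-moment argument over sublattice vectors. First I would make the termination condition explicit: setting $\tau=1$ in (\ref{eq:common equation}), a basis is SR-CVP reduced exactly when, for every index $i$, the origin is a closest point of the sublattice $\Lambda(\mathbf{B}_{[n]\setminus i})$ to $\mathbf{b}_i$; equivalently, expanding $\|\mathbf{b}_i-\mathbf{v}\|^2\ge\|\mathbf{b}_i\|^2$, for every $i$ and every nonzero $\mathbf{v}\in\Lambda(\mathbf{B}_{[n]\setminus i})$ one has $2\langle\mathbf{b}_i,\mathbf{v}\rangle\le\|\mathbf{v}\|^2$. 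Hence the basis fails to be SR-CVP reduced only if some index $i$ admits a nonzero $\mathbf{v}$ violating this inequality. Since the columns are i.i.d., a union bound over $i$ reduces the task to bounding $n\cdot\Pr(F_1)$, where $F_1$ is the failure event at $i=1$.

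For the fixed index $i=1$, I would parametrize the sublattice vectors as $\mathbf{v}_{\mathbf{c}}=\mathbf{B}_{[n]\setminus 1}\mathbf{c}$ with $\mathbf{c}\in\mathbb{Z}^{n-1}\setminus\{\mathbf{0}\}$ and apply a second union bound over $\mathbf{c}$:
\[
\Pr(F_1)\le\sum_{\mathbf{c}\neq\mathbf{0}}\Pr\!\left(\,|2\langle\mathbf{b}_1,\mathbf{v}_{\mathbf{c}}\rangle|>\|\mathbf{v}_{\mathbf{c}}\|^2\,\right).
\]
The key simplification is that $\mathbf{b}_1$ is independent of $\mathbf{v}_{\mathbf{c}}$. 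Conditioning on $\mathbf{v}_{\mathbf{c}}$, the inner product $\langle\mathbf{b}_1,\mathbf{v}_{\mathbf{c}}\rangle$ is $\mathcal{N}(0,\|\mathbf{v}_{\mathbf{c}}\|^2)$, so the Gaussian tail bound gives a conditional probability at most $e^{-\|\mathbf{v}_{\mathbf{c}}\|^2/8}$. Averaging over $\mathbf{v}_{\mathbf{c}}\sim\mathcal{N}(\mathbf{0},\|\mathbf{c}\|^2\mathbf{I}_n)$, whose squared norm is distributed as $\|\mathbf{c}\|^2\chi^2_n$, and invoking the chi-square moment generating function yields the clean per-term bound $(1+\|\mathbf{c}\|^2/4)^{-n/2}$. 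Thus $\Pr(F_1)\le S_{n-1}$ with $S_{n-1}:=\sum_{\mathbf{c}\neq\mathbf{0}}(1+\|\mathbf{c}\|^2/4)^{-n/2}$. Note that only the marginal probabilities enter, so the dependence between the various $\mathbf{v}_{\mathbf{c}}$ is harmless.

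The final step is to show this lattice-point series decays fast enough that $n\,S_{n-1}\to0$. Grouping terms by $\|\mathbf{c}\|^2=k$ with multiplicity $r_{n-1}(k)$, the leading $k=1$ contribution is $2(n-1)(4/5)^{n/2}$, which decays exponentially and survives multiplication by $n$. I expect the main obstacle to be proving that the tail $\sum_{k\ge2}r_{n-1}(k)(1+k/4)^{-n/2}$ does not overwhelm this leading term, which requires controlling $r_{n-1}(k)$ uniformly across all scales of $k$. For $k=O(1)$ the shell count grows only polynomially in $n$ and is crushed by the exponential factor; for $k=\Theta(n)$ a volume estimate bounds the count so that each such term is at most $(c_0/n)^{n/2}$, vanishing super-exponentially. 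The intermediate regime is the delicate part; here I would use the theta-function representation
\[
\sum_{\mathbf{c}\in\mathbb{Z}^{n-1}}\Big(1+\tfrac{\|\mathbf{c}\|^2}{4}\Big)^{-n/2}=\frac{1}{\Gamma(n/2)}\int_0^\infty t^{(n-2)/2}e^{-t}\,\theta(e^{-t/4})^{n-1}\,dt,
\]
with $\theta(q)=\sum_{m\in\mathbb{Z}}q^{m^2}$, and a Laplace-type analysis to confirm the integral is governed by the $k=1$ scale. Each regime then certifies $S_{n-1}=O\!\big(n\,(4/5)^{n/2}\big)$, whence $\Pr(\text{not SR-CVP reduced})\le n\,S_{n-1}\to0$, as claimed.
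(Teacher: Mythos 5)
Your core estimate is exactly the paper's: the paper bounds $\mathrm{Pr}\bigl(\|\mathbf{b}_1+\sum_{i\ge 2}a_i\mathbf{b}_i\|^2\le\|\mathbf{b}_1\|^2\bigr)$ by a Chernoff argument $\mathbb{E}\,e^{-\beta(\|\mathbf{b}_1+\mathbf{v}\|^2-\|\mathbf{b}_1\|^2)}$, evaluates the Gaussian integral to get $\bigl(1+2\beta(1-2\beta)\sum a_i^2\bigr)^{-n/2}$, and sets $\beta=1/4$ to obtain $(1+\|\mathbf{a}\|^2/4)^{-n/2}$ --- precisely the per-term bound you reach by conditioning on $\mathbf{v}_{\mathbf{c}}$, applying the Gaussian tail $e^{-\|\mathbf{v}_{\mathbf{c}}\|^2/8}$, and averaging via the $\chi^2_n$ moment generating function (your route and theirs are the same computation in a different order). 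Where you genuinely go beyond the paper is in what you do with this bound: the paper stops after showing that the probability vanishes for each \emph{fixed} coefficient vector and each fixed index, and never performs the union bound over the countably infinite family of coefficient vectors $\mathbf{c}\in\mathbb{Z}^{n-1}\setminus\{\mathbf{0}\}$ or over the $n$ column indices, even though the event ``not SR-CVP reduced'' is precisely the union of all these failure events. Your version supplies the missing pieces --- the union bound $n\sum_{\mathbf{c}\neq\mathbf{0}}(1+\|\mathbf{c}\|^2/4)^{-n/2}$ and the verification that this lattice sum is dominated by the $\|\mathbf{c}\|^2=1$ shell, giving $O\bigl(n^2(4/5)^{n/2}\bigr)\to 0$ --- so it is a complete proof where the paper's is only a pointwise statement. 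The summation step you flag as delicate is indeed the only part left as a sketch, but the regime-splitting you describe (polynomial shell counts for $k=O(1)$, volume bounds for $k=\Theta(n)$, and the theta-function/Laplace analysis in between) is sound and standard; minor nits are a dropped factor of $2$ in the two-sided Gaussian tail and the double counting of $\pm\mathbf{c}$, neither of which affects the conclusion.
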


\begin{IEEEproof}
We need to show that for all choices of coefficients $a_{i}'s$ in
$\mathbb{Z}$ with at least one nonzero $a_{i}$, the probability  
\[
\mathrm{Pr}\left(\left\Vert \mathbf{b}_{1}+\sum_{i=2}^{n}\mathbf{b}_{i}a_{i}\right\Vert ^{2}\leq\left\Vert \mathbf{b}_{1}\right\Vert ^{2}\right)
\]
 vanishes as the problem size $n$ increases. Since $\sum_{i=2}^{n}\mathbf{b}_{i}a_{i}$
is an isotropic Gaussian random vector with covariance $\mathbb{E}\left(\left(\sum_{i=2}^{n}\mathbf{b}_{i}a_{i}\right)\left(\sum_{i=2}^{n}\mathbf{b}_{i}a_{i}\right)^{\top}\right)=\left(\sum_{i=2}^{n}a_{i}^{2}\right)\mathbf{I}_{n}$,
then for any $\beta>0$,
\begin{align}
 & \mathrm{Pr}\left(\left\Vert \mathbf{b}_{1}+\sum_{i=2}^{n}\mathbf{b}_{i}a_{i}\right\Vert ^{2}\leq\left\Vert \mathbf{b}_{1}\right\Vert ^{2}\right) \nonumber \\
 & \leq\mathbb{E}\left(e^{-\beta\left(\left\Vert \mathbf{b}_{1}+\sum_{i=2}^{n}\mathbf{b}_{i}a_{i}\right\Vert ^{2}-\left\Vert \mathbf{b}_{1}\right\Vert ^{2}\right)}\right), \nonumber\\
 & =\int\frac{d\mathbf{x}d\mathbf{v}}{\left(2\pi\right)^{n}} \nonumber\\
 &e^{-\frac{1}{2}\left[\mathbf{v}^{\top},\mathbf{x}^{\top}\right]\left[\begin{array}{cc}
\mathbf{I}_{n} & 2\sqrt{\sum_{i=2}^{n}a_{i}^{2}}\beta\mathbf{I}_{n} \nonumber\\
2\sqrt{\sum_{i=2}^{n}a_{i}^{2}}\beta\mathbf{I}_{n} & \left(1+2\beta \sum_{i=2}^{n}a_{i}^{2} \right)\mathbf{I}_{n}
\end{array}\right]\left[\begin{array}{c}
\mathbf{v} \nonumber\\
\mathbf{x}
\end{array}\right]}\\
 & =\det\left(\left[\begin{array}{cc}
 \mathbf{I}_{n} & 2\sqrt{\sum_{i=2}^{n}a_{i}^{2}}\beta\mathbf{I}_{n} \nonumber\\
 2\sqrt{\sum_{i=2}^{n}a_{i}^{2}}\beta\mathbf{I}_{n} & \left(1+2\beta \sum_{i=2}^{n}a_{i}^{2} \right)\mathbf{I}_{n}
 \end{array}\right]\right)^{-1/2} \nonumber\\
 & =\left(\frac{1}{1+2\beta\left(1-2\beta\right)\sum_{i=2}^{n}a_{i}^{2}}\right)^{n/2}. \label{gaussEq1}
\end{align}
By optimizing over $\beta$ in the denominator,
 we have $1+2\beta\left(1-2\beta\right)\sum_{i=2}^{n}a_{i}^{2} \leq 
1 + \frac{1}{4} \sum_{i=2}^{n}a_{i}^{2} $.
This means we can use $\beta=\frac{1}{4}$ to reach the tightest bound for  inequality (\ref{gaussEq1}).
Therefore, for any $\varepsilon>0$, we have

\begin{align*}
 & \lim_{n\rightarrow\infty}\mathrm{Pr}\left(\left\Vert \mathbf{b}_{1}+\sum_{i=2}^{n}\mathbf{b}_{i}a_{i}\right\Vert ^{2}\leq\left\Vert \mathbf{b}_{1}\right\Vert ^{2}\right)\\
 & \leq\lim_{n\rightarrow\infty}\left(\frac{1}{1+2\beta\left(1-2\beta\right)\sum_{i=2}^{n}a_{i}^{2}}\right)^{n/2}\\
 & <\varepsilon.
\end{align*}
\end{IEEEproof}
Next, we investigate the reduction on the dual of a Gaussian random
basis, which arises in our detection problem. For an input basis $\mathbf{B}$
we define
\[
\theta_{i,j}=\arccos\left(\frac{\left|\langle\mathbf{b}_{i},\mathbf{b}_{j}\rangle\right|}{\left\Vert \mathbf{b}_{i}\right\Vert \left\Vert \mathbf{b}_{j}\right\Vert }\right),1\leq i\neq j\leq n.
\]

The following lemma says that the SR-Pair method can provide a Gauss-reduced
basis for all pairs of vectors with pairwise angles $\theta_{i,j}>\pi/3$.
\begin{lem}
\label{claim: PR properties} For an SR-Pair reduced basis, we have
$\theta_{i,j}>\pi/3$ for all $i\neq j$.
\end{lem}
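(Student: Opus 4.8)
The plan is to convert the termination condition of SR-Pair into a pair of Gauss-type (Lagrange) size-reduction inequalities for every two basis vectors, and then read off the angle bound directly. Recall from (\ref{eq:app near}) that the pairwise update replaces $\mathbf{b}_i$ by $\mathbf{b}_i - \lfloor\langle\mathbf{b}_i,\mathbf{b}_j\rangle/\langle\mathbf{b}_j,\mathbf{b}_j\rangle\rceil\mathbf{b}_j$, and that among all integer multiples $\mathbf{b}_i - c\mathbf{b}_j$ with $c\in\mathbb{Z}$, the rounded coefficient $c^\star=\lfloor\mu_{ij}\rceil$, where $\mu_{ij}=\langle\mathbf{b}_i,\mathbf{b}_j\rangle/\|\mathbf{b}_j\|^2$, is exactly the integer minimizer of the quadratic $\|\mathbf{b}_i - c\mathbf{b}_j\|^2 = \|\mathbf{b}_i\|^2 - 2c\langle\mathbf{b}_i,\mathbf{b}_j\rangle + c^2\|\mathbf{b}_j\|^2$.

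First I would show that being SR-Pair reduced forces $c^\star=0$ for every ordered pair. Expanding the non-improvement condition $\|\mathbf{b}_i - c^\star\mathbf{b}_j\|^2 \ge \|\mathbf{b}_i\|^2$ and substituting $\langle\mathbf{b}_i,\mathbf{b}_j\rangle=\mu_{ij}\|\mathbf{b}_j\|^2$ yields $\lfloor\mu_{ij}\rceil\,(\lfloor\mu_{ij}\rceil - 2\mu_{ij})\ge 0$; since $\|\mathbf{b}_j\|^2>0$ and $|\lfloor\mu_{ij}\rceil - \mu_{ij}|\le 1/2$, a nonzero $\lfloor\mu_{ij}\rceil$ would make this product strictly negative, a contradiction. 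Hence $\lfloor\mu_{ij}\rceil=0$, i.e. $|\langle\mathbf{b}_i,\mathbf{b}_j\rangle|\le\tfrac12\|\mathbf{b}_j\|^2$. Because SR-Pair cycles over all columns, the identical argument applied to the update of $\mathbf{b}_j$ by $\mathbf{b}_i$ produces the companion inequality $|\langle\mathbf{b}_i,\mathbf{b}_j\rangle|\le\tfrac12\|\mathbf{b}_i\|^2$.

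With both inequalities in hand, I would assume without loss of generality that $\|\mathbf{b}_i\|\le\|\mathbf{b}_j\|$ and bound
\[
\cos\theta_{i,j}=\frac{|\langle\mathbf{b}_i,\mathbf{b}_j\rangle|}{\|\mathbf{b}_i\|\,\|\mathbf{b}_j\|}\le\frac{\tfrac12\|\mathbf{b}_i\|^2}{\|\mathbf{b}_i\|\,\|\mathbf{b}_j\|}=\frac{\|\mathbf{b}_i\|}{2\|\mathbf{b}_j\|}\le\frac12 .
\]
This is precisely the statement that the two-dimensional sublattice spanned by $\mathbf{b}_i,\mathbf{b}_j$ is Gauss-reduced, and it delivers $\theta_{i,j}\ge\pi/3$. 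The remaining manipulation is routine.

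The hard part will be the \emph{strict} inequality. Equality $\cos\theta_{i,j}=1/2$ can occur only when $\|\mathbf{b}_i\|=\|\mathbf{b}_j\|$ and $|\mu_{ij}|=|\mu_{ji}|=1/2$ hold simultaneously; in that configuration the move $\mathbf{b}_i-\mathbf{b}_j$ leaves the length unchanged and is flagged as an ineffective attempt, so the basis is genuinely stable at the boundary. I therefore expect the main obstacle to be excluding this borderline case: one closes the gap either by appealing to the strict acceptance threshold together with the rounding convention (so that $c^\star=0$ already forces $|\mu_{ij}|<1/2$ away from a measure-zero set), or by observing that for the continuous bases of interest here, such as the dual of a Gaussian random matrix, the event $\|\mathbf{b}_i\|=\|\mathbf{b}_j\|$ with half-integer $\mu_{ij}$ has probability zero.
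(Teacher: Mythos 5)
Your proof is correct and follows essentially the same route as the paper's: termination of SR-Pair forces $\lfloor\langle\mathbf{b}_i,\mathbf{b}_j\rangle/\langle\mathbf{b}_j,\mathbf{b}_j\rangle\rceil=0$ for every ordered pair, which yields $\cos\theta_{i,j}\le\frac{1}{2}\,\min(\left\Vert\mathbf{b}_i\right\Vert,\left\Vert\mathbf{b}_j\right\Vert)/\max(\left\Vert\mathbf{b}_i\right\Vert,\left\Vert\mathbf{b}_j\right\Vert)\le 1/2$. The only difference is that you justify why the rounded coefficient must vanish and flag the half-integer boundary case where strictness could fail; the paper's two-line proof asserts both silently, so your extra care is a refinement rather than a departure.
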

\begin{proof} If a lattice basis $\mathbf{B}$ is non-reducible by
SR-Pair, we have $\lfloor\langle\mathbf{b}_{i},\mathbf{b}_{j}\rangle/\langle\mathbf{b}_{j},\mathbf{b}_{j}\rangle\rceil=0$
$\forall i\neq j$. Therefore the lemma follows from
\[
\cos\theta_{i,j}<\frac{1}{2}\frac{\left\Vert \mathbf{b}_{i}\right\Vert }{\left\Vert \mathbf{b}_{j}\right\Vert }<\frac{1}{2}\frac{\min(\left\Vert \mathbf{b}_{i}\right\Vert ,\left\Vert \mathbf{b}_{j}\right\Vert )}{\max(\left\Vert \mathbf{b}_{i}\right\Vert ,\left\Vert \mathbf{b}_{j}\right\Vert )}\leq1/2.
\]

\end{proof} 

Here we argue that the pairwise angles are dense in the dual of a
Gaussian random basis. Fig. \ref{fig:histoPrimeDual}-(a) plots the
histogram of such random matrices. It shows that so a large number
of vectors satisfy $\theta_{i,j}<\pi/3$, and these vectors will trigger
the reduction in SR-Pair/SR-Hash. On the contrary, as predicted by
Proposition \ref{prop:nonReduceSLR}, Fig. \ref{fig:histoPrimeDual}-(b)
shows that the primal basis will not be
reduced by SR-Pair/SR-Hash. \textcolor{black}{Bases with dense angles also feature large orthogonality defects. In Fig. \ref{fig:oddemo}, we plot the OD versus dimension $n$ relations respectively for the dual and primal Gaussian random matrices. The figure shows the dual bases approximately have a growth rate of $O(20^{n^{1.5}})$, while that of the primal basis is extremely small. The above confirms that the objective lattice bases in MIMO detection are easily reducible by tuning the pairwise angles.} 

\begin{figure}[htb]
\center

\subfloat[Dual bases.]{\includegraphics[width=0.25\textwidth,height=0.4\textwidth]{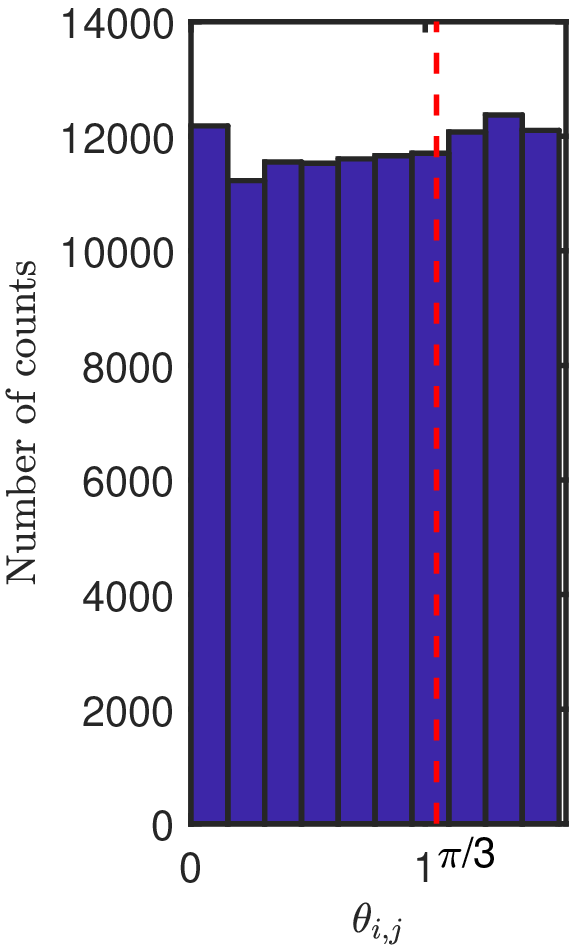}

}\subfloat[Primal bases.]{\includegraphics[width=0.25\textwidth,height=0.4\textwidth]{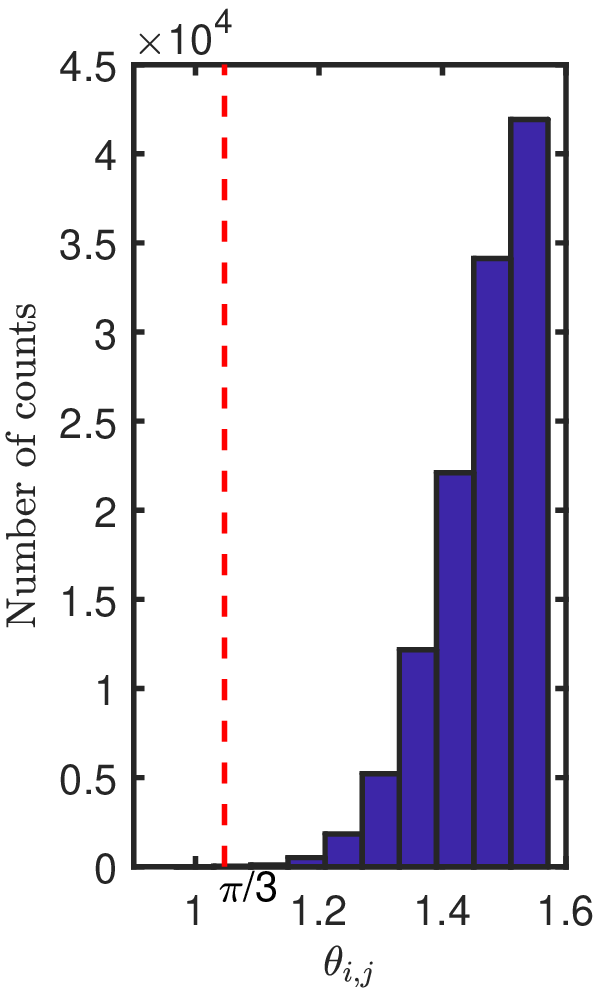}

}\caption{The histogram of the pairwise angles for the dual and primal of $60\times60$
Gaussian random bases.}
\label{fig:histoPrimeDual}
\end{figure}

\begin{figure}[htb]
	\center
	
	\includegraphics[width=0.4\textwidth]{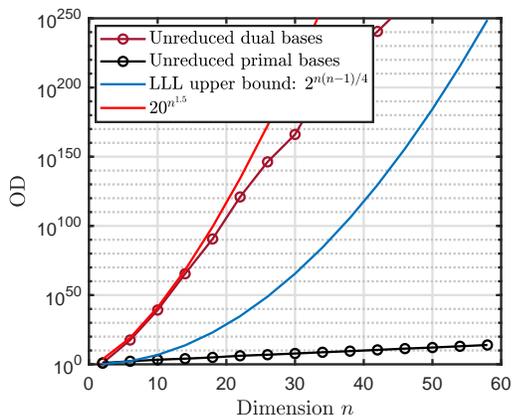}
	
	\caption{\textcolor{black}{The ODs of square Gaussian random bases.}}
	\label{fig:oddemo}
\end{figure}

\bibliographystyle{IEEEtranMine}
\bibliography{lib}

\begin{IEEEbiographynophoto}{Shanxiang Lyu}
	received the B.S. and M.S. degrees in electronic and information engineering from
	South China University of Technology, Guangzhou, China, in 2011
	and 2014, respectively, and the Ph.D. degree from the
	Electrical and Electronic Engineering Department, Imperial College London,
	in 2018. 
	He is currently a lecturer   
	with the College of Cyber Security, Jinan University.
	His
	research interests are in lattice theory, algebraic number theory, and their applications.
\end{IEEEbiographynophoto}

\begin{IEEEbiographynophoto}{Jinming Wen}
received his Bachelor degree in Information and Computing Science from Jilin Institute of Chemical Technology, Jilin, China, in 2008, his M.Sc. degree in Pure Mathematics from the Mathematics Institute of Jilin University, Jilin, China, in 2010, and his Ph.D degree in Applied Mathematics from McGill University, Montreal, Canada, in 2015. He was a postdoctoral research fellow at Laboratoire LIP (from March 2015 to August 2016), University of Alberta (from September 2016 to August 2017) and University of Toronto (from September 2017 to August 2018). He has been a full professor in Jinan University, Guangzhou since September 2018. His research interests are in the areas of lattice reduction and sparse recovery. He has published around 50 papers in top journals (including Applied and Computational Harmonic Analysis, IEEE Transactions on Information Theory/Signal Processing/Wireless Communications) and conferences. He is an Associate Editor of IEEE Access.
\end{IEEEbiographynophoto}

\begin{IEEEbiographynophoto}{Jian Weng}
received the B.S. and M.S. degrees
in computer science from South China University of Technology, Guangzhou, China, in 2001
and 2004, respectively, and the Ph.D. degree in
computer science from Shanghai Jiao Tong University, Shanghai, China, in 2008.
He is currently a Professor and the Executive
Dean with the College of Information Science
and Technology, Jinan University, Guangzhou,
China. He has authored/coauthored 80 papers
in international conferences and journals, such
as CRYPTO, EUROCRYPT, ASIACRYPT, TCC, PKC, CT-RSA, IEEE
TPAMI, IEEE TDSC, etc. His research areas include public key cryptography, cloud security, blockchain, etc.
Prof. Weng was a recipient of the Young Scientists Fund of the
National Natural Science Foundation of China in 2018, and the Cryptography Innovation Award from Chinese Association for Cryptologic
Research (CACR) in 2015. He served as the General Co-Chair for SecureComm 2016, TPC Co-Chairs for RFIDsec13 Asia, and ISPEC 2011,
and program committee members for more than 40 international cryptography and information security conferences. He also serves as an
Associate Editor for IEEE TRANSACTIONS ON VEHICULAR TECHNOLOGY.
\end{IEEEbiographynophoto}

\begin{IEEEbiographynophoto}{Cong Ling} (S'99-A'01-M'04) 
	received the B.S. and M.S. degrees in electrical engineering from
	the Nanjing Institute of Communications Engineering, Nanjing, China, in 1995
	and 1997, respectively, and the Ph.D. degree in electrical engineering from
	the Nanyang Technological University, Singapore, in 2005.
	He had been on the faculties of the Nanjing Institute of Communications
	Engineering and King's College. He is currently a Reader (Associate Professor) with the Electrical and Electronic Engineering Department, Imperial
	College London. His research interests are coding, information theory, and
	security, with a focus on lattices.
	Dr. Ling has served as an Associate Editor for the IEEE TRANSACTIONS
	ON COMMUNICATIONS and the IEEE TRANSACTIONS ON VEHICULAR
	TECHNOLOGY.\end{IEEEbiographynophoto}

\end{document}